\newtheorem{example}{Example}
\newtheorem{definition}{Definition}
\newtheorem{lemma}{Lemma}
\newtheorem{theorem}{Theorem}
\def\endexam{\hspace*{\fill}~$\blacksquare$\par\endtrivlist\unskip}
\algnewcommand{\LeftComment}[1]{\Statex \(\triangleright\) #1}
\begin{document}
%
% paper title
% can use linebreaks \\ within to get better formatting as desired
\title{Effective Static and Adaptive Carrier Sensing for Dense Wireless CSMA Networks}
%
%
% author names and IEEE memberships
% note positions of commas and nonbreaking spaces ( ~ ) LaTeX will not break
% a structure at a ~ so this keeps an author's name from being broken across
% two lines.
% use \thanks{} to gain access to the first footnote area
% a separate \thanks must be used for each paragraph as LaTeX2e's \thanks
% was not built to handle multiple paragraphs
%
%
%\IEEEcompsocitemizethanks is a special \thanks that produces the bulleted
% lists the Computer Society journals use for "first footnote" author
% affiliations. Use \IEEEcompsocthanksitem which works much like \item
% for each affiliation group. When not in compsoc mode,
% \IEEEcompsocitemizethanks becomes like \thanks and
% \IEEEcompsocthanksitem becomes a line break with idention. This
% facilitates dual compilation, although admittedly the differences in the
% desired content of \author between the different types of papers makes a
% one-size-fits-all approach a daunting prospect. For instance, compsoc 
% journal papers have the author affiliations above the "Manuscript
% received ..."  text while in non-compsoc journals this is reversed. Sigh.

%\author{Author A,~\IEEEmembership{Member,~IEEE,} Author B,~\IEEEmembership{Member,~IEEE,} Author C,~\IEEEmembership{Fellow,~IEEE}
\author{Chi-Kin Chau,~\IEEEmembership{Member,~IEEE,} Ivan W. H. Ho,~\IEEEmembership{Member,~IEEE,} Zhenhui Situ, \\ Soung Chang Liew,~\IEEEmembership{Fellow,~IEEE}, and Jialiang Zhang
\IEEEcompsocitemizethanks{\IEEEcompsocthanksitem C.-K. Chau is with the Masdar Institute of Science and Technology, UAE. E-mail: ckchau@masdar.ac.ae. I.W.-H. Ho and Z. Situ are with the Hong Kong Polytechnic University, Hong Kong. E-mail: ivanwh.ho@polyu.edu.hk, z.situ@connect.polyu.hk. S. C. Liew and J. Zhang are with the Chinese University of Hong Kong, Hong Kong. E-mail: {soung, jlzhang4}@ie.cuhk.edu.hk. Partial preliminary results have been presented in \cite{CCLZ13aggcs}.}
}%

% The paper headers
\markboth{}%
{Chau \MakeLowercase{\textit{et al.}}: 
Effective Static and Adaptive Carrier Sensing Mechanisms for Dense CSMA Networks}
% The only time the second header will appear is for the odd numbered pages
% after the title page when using the twoside option.
% 
% *** Note that you probably will NOT want to include the author's ***
% *** name in the headers of peer review papers.                   ***
% You can use \ifCLASSOPTIONpeerreview for conditional compilation here if
% you desire.

% The publisher's ID mark at the bottom of the page is less important with
% Computer Society journal papers as those publications place the marks
% outside of the main text columns and, therefore, unlike regular IEEE
% journals, the available text space is not reduced by their presence.
% If you want to put a publisher's ID mark on the page you can do it like
% this:
%\IEEEpubid{0000--0000/00\$00.00~\copyright~2007 IEEE}
% or like this to get the Computer Society new two part style.
%\IEEEpubid{\makebox[\columnwidth]{\hfill 0000--0000/00/\$00.00~\copyright~2007 IEEE}%
%\hspace{\columnsep}\makebox[\columnwidth]{Published by the IEEE Computer Society\hfill}}
% Remember, if you use this you must call \IEEEpubidadjcol in the second
% column for its text to clear the IEEEpubid mark (Computer Society jorunal
% papers don't need this extra clearance.)

% for Computer Society papers, we must declare the abstract and index terms
% PRIOR to the title within the \IEEEcompsoctitleabstractindextext IEEEtran
% command as these need to go into the title area created by \maketitle.
\IEEEcompsoctitleabstractindextext{%
\begin{abstract}
The increasingly dense deployments of wireless CSMA networks arising from applications of Internet-of-things call for an improvement to mitigate the interference among simultaneous transmitting wireless devices. For cost efficiency and backward compatibility with legacy transceiver hardware, a simple approach to address interference is by appropriately configuring the carrier sensing thresholds in wireless CSMA protocols, particularly in dense wireless networks. Most prior studies of the configuration of carrier sensing thresholds are based on a simplified conflict graph model, whereas this paper considers a realistic signal-to-interference-and-noise ratio model. We provide a comprehensive study for two effective wireless CSMA protocols: {\em Cumulative-interference-Power Carrier Sensing} and {\em Incremental-interference-Power Carrier Sensing}, in two aspects: (1) {\em static} approach that sets a universal carrier sensing threshold to ensure interference-safe transmissions regardless of network topology, and (2) {\em adaptive} approach that adjusts the carrier sensing thresholds dynamically based on the feedback of nearby transmissions. We also provide simulation studies to evaluate the starvation ratio, fairness, and goodput of our approaches. 
\end{abstract}

\begin{keywords}
CSMA; WiFi, IEEE 802.11, IEEE 802.15, hidden node problem, SINR model, carrier sensing
\end{keywords}}

% make the title area
\maketitle

% To allow for easy dual compilation without having to reenter the
% abstract/keywords data, the \IEEEcompsoctitleabstractindextext text will
% not be used in maketitle, but will appear (i.e., to be "transported")
% here as \IEEEdisplaynotcompsoctitleabstractindextext when compsoc mode
% is not selected <OR> if conference mode is selected - because compsoc
% conference papers position the abstract like regular (non-compsoc)
% papers do!
\IEEEdisplaynotcompsoctitleabstractindextext
% \IEEEdisplaynotcompsoctitleabstractindextext has no effect when using
% compsoc under a non-conference mode.

% For peer review papers, you can put extra information on the cover
% page as needed:
% \ifCLASSOPTIONpeerreview
% \begin{center} \bfseries EDICS Category: 3-BBND \end{center}
% \fi
%
% For peerreview papers, this IEEEtran command inserts a page break and
% creates the second title. It will be ignored for other modes.
\IEEEpeerreviewmaketitle

\section{Introduction}

The recent years have witnessed a dramatic rise in the deployments of increasingly dense wireless networks. In particular, the popular paradigm of Internet-of-things \cite{IOT1}\cite{IOT2} is accelerating the applications of pervasive portable wireless devices. These wireless devices are expected to be low-cost and ubiquitously present. Hence, a cost efficient and scalable communication mechanism is required to support these densely populated communicating devices.

Carrier-sensing multi-access (CSMA) protocols, a prominent class of distributed and randomized medium-access protocols, have been widely employed in various practical wireless networks (e.g., WiFi, Zigbee), which can empower Internet-of-things. The benefits CSMA include simple implementation and self-organizing adaptation to dynamic wireless network topology. 
Nonetheless, CSMA often suffers from considerably degraded performance in dense wireless networks due to the limitation of distributed decisions at transmitters, and the cumulative interference from surrounding simultaneous transmissions in the neighborhood. Improving the performance of CSMA, in particular for dense wireless networks, is a crucial research problem in wireless networking. 

For cost efficiency and backward compatibility with legacy transceiver hardware, one of the simplest approaches to address  interference in dense wireless networks is to appropriately configure the carrier sensing thresholds in wireless CSMA protocols. The {\em carrier sensing threshold} determines the ability of transmission permitted by the locally measured interference level. Since CSMA is a distributed protocol, transmitters determine their transmission operations based on locally perceived information. The locally measured interference level can assist the estimation of the likelihood of a successful reception at the receiver, and the interference that may cause to other surrounding simultaneous transmissions. The carrier sensing threshold is typically a pre-set parameter in the software driver of transceiver. Certain WiFi chipset vendors allow tunable carrier sensing threshold (as known as energy detection (ED) threshold) through manufacturer-provided driver interfaces \cite{ZGYCRH04cs}.

There has been prior work \cite{GCL06WirelessMultiHopThput, JL08hidden, DDT09csma} on interference mitigation among simultaneous transmissions in wireless CSMA networks, but most assumed an overly simplified interference model based on the notion of ``conflict graphs''. This model ignores the realistic physical layer characteristic of signal-to-interference-and-noise ratio (SINR) in wireless communications. Meanwhile, in the literature of physical-layer wireless communication, there has been extensive work \cite{PC1,PC2,VANET1,VANET2} utilizing power control or signal processing techniques to reduce the interference among simultaneous transmissions. These results usually require more sophisticated hardware designs or considerable revamps of physical layer protocols. In wireless networks populated with legacy hardware or low-end wireless transceivers, a simpler solution with minimum reconfigurations and redesigns of wireless systems is desirable.
In this work, we aim at providing practical solutions by appropriately configuring the carrier sensing thresholds in CSMA protocols to mitigate the interference among simultaneous transmissions in dense CSMA networks, which require minimal reconfigurations or low-cost implementations in the existing software stacks or drivers of transceivers.

We provide a comprehensive study for two effective CSMA mechanisms: Cumulative-interference-Power Carrier Sensing (CPCS) and Incremental-interference-Power Carrier Sensing (IPCS). CPCS is a conventional approach employed by CSMA, in which a transmitter proceeds its transmissions only if the locally measured cumulative interference power is below the carrier sensing threshold. In today's CSMA devices, the carrier sensing threshold is a pre-set fixed value, which is pre-configured to follow a certain rule of thumb, for example, 10-20 dB above the background noise level. However, there has been a lack of theoretical ground for configuring such a value. Meanwhile, IPCS is an alternative approach proposed by our previous work \cite{FLH10IPCS} to harness the detection of incremental power of interference as a basis for determining local transmissions. 

In this paper, we provide a theoretical basis for configuring a robust carrier sensing threshold for CPCS to ensure interference-safe transmissions (i.e., no transmission failure due to hidden nodes) no matter how dense the network topology becomes. In particular, we note that configuring a robust carrier sensing threshold is non-trivial, as it needs to consider the salient effects of arbitrary ordering of carrier sensing operations and the presence of ACK frames in the SINR model. On the other hand, we present adaptive approaches to both CPCS and IPCS that adjust the carrier sensing thresholds based on the dynamic feedback of nearby transmissions. Our adaptive approaches can significantly improve the goodput while balancing the fairness and starvation ratio in dense wireless networks. We provide extensive simulation studies to evaluate the performance of our approaches. Finally, we discuss the issues of practical implementation.

\subsection{Summary of Results and Contributions}

In CSMA networks, before a transmitter attempts its transmission, it needs to infer the channel condition. If the transmitter infers that its transmission is not interference-safe, namely possibly upsetting (or to be upset by) any on-going transmission, then it defers its transmission. 
There are two approaches of how the inference is determined at transmitters:
\begin{enumerate} 

\item ({\em CPCS}):
A conventional approach in the existing CSMA protocol (which we call {\em Cumulative-interference-Power Carrier-Sensing} (CPCS)) is to let the transmitter measure the cumulative interference power -- the total power of all concurrent transmissions and the background noise at the pending transmitter. A transmission will proceed only if the locally measured cumulative interference power is below a pre-specified {\em carrier sensing threshold}. 

\item ({\em IPCS}):
Another approach is recently proposed by \cite{FLH10IPCS}, called {\em incremental-Power Carrier-Sensing} (IPCS), which infers the distances between the concurrent transmitters based on the local measurement of incremental power changes at each node. Specifically, a transmitter will defer transmission if the incremental power change measured recently is above a power threshold. IPCS requires a modification to the software protocol stack, but not the hardware. 

\end{enumerate} 
CPCS and IPCS can be configured in two manners:
\begin{enumerate} 

\item ({\em Static}): The system parameters (e.g., carrier sensing thresholds) remain fixed throughout the operations of CSMA, regardless of changes in the number or locations of nodes. A static approach is suitable, if there is limited ability to dynamically adjust the carrier sensing threshold (e.g., in the existing CSMA systems). Although a static approach is not optimized to specific network topologies, it gives a sensible baseline setting, and can be applied to highly dense wireless networks.

\item ({\em Adaptive}): The adaptive mechanism allows the parameters to be adjustable according to dynamic changes of the environment. Instead of a uniform setting of static parameters at every node, the adaptive mechanism adapts heterogeneous parameters at different nodes. Unlike the static approach, the adaptive approach requires changes to the original protocols, and the availability of feedback. The initial values of parameters in adaptive mechanisms usually follow from the static setting. 

\end{enumerate} 

We note that an improperly high carrier sensing threshold fails to safe-guard interference, leading to the hidden node (HN) problem \cite{JL08hidden}, whereas an improperly low threshold causes overly conservative protection against interference and decrease in throughput, caused by the exposed node (EN) problem \cite{JL08hidden}. For configuring of the static carrier sensing thresholds, we focus on the robust value that can guarantee {\em interference-safe} transmissions among nodes (i.e., no transmission failure due to hidden nodes), and hence, is able to cope with highly dense wireless networks. We note that there is a trade-off between hidden nodes and exposed nodes. Our static mechanisms eliminate the hidden nodes, while the adaptive mechanisms improve the exposed nodes. 

Due to the following two effects, it is non-trivial to determine a robust threshold:
\begin{enumerate} 

\item ({\em Effect of Ordering}): CSMA is a distributed protocol, in which the transmitters decide their transmissions without global coordination. Without centralized coordination, the transmission order of the transmitters may be arbitrary. Specifically, an earlier transmitter that measured low interference power before the start of transmission may be disrupted by a later transmitter that causes unforeseen higher interference power to it. It is desirable to set a carrier sensing threshold to tolerate arbitrary ordering of local measurements of transmitters.

\item ({\em Effect of ACKs}): CSMA is an ACK-based protocol, in which the receivers are required to reply an ACK frame for each successful transmission. Hence, the carrier sensing threshold not only must ensure the successful receptions of DATA frames in one direction, but also the successful receptions of ACK frames in the opposite direction in the presence of other interfering transmitters. The consideration of bi-directional communications in terms of SINR complicates the configuration of carrier sensing threshold. 

\end{enumerate} 

In the following, we devise a static carrier sensing threshold for interference-safe CPCS to cope with both the effects of arbitrary ordering and ACKs under the SINR model. We also compare that with static setting of IPCS obtained in previous work \cite{FLH10IPCS}. 
Furthermore, we provide a performance evaluation of CSMA networks with CPCS, and observe that the carrier sensing threshold provided by our theoretical study is relatively robust in spite of uncertain parameters and the presence of fading. 

For adaptive mechanisms, we design adaptive CPCS and IPCS that adjust the carrier sensing thresholds starting at the values of static settings, with respect to cumulative power-interference power and pairwise incremental power, respectively. Note that adaptive mechanisms can strike a balance between hidden node problem and exposed node problem. Through simulation studies, we observe that adaptive mechanisms can achieve performance superior to static mechanisms, in terms of goodput, fairness, and failure rates (in the presence of fading).

\subsection{Related Work and Comparisons}

For static mechanisms of CSMA, ensuring interference-safe transmissions has been addressed in the extant literature (e.g., \cite{JL08hidden, DDT09csma, GBCG11, CGHT11}), in which a CSMA network was often modelled by a ``conflict graph'' that is induced by a geometric graph based on the transmitters and receivers. 
%Hidden node problem arises because a transmitter fails to perceive the conflicting transmitters surrounding the respective receiver. The presence of ACK frames was often not explicitly considered in prior work. Ref \cite{JL08hidden} provided formal definitions of interference-safe transmissions and hidden node problem based on a conflict graph model, taking into account the effect of ACK frames. 
The conflict graph model relies on the binary constraint among pairs of transmitters, which does not consider the additive property of wireless signals. Hence, it calls for a more realistic model with signal-to-interference-and-noise ratio (SINR). The study of the hidden node problem in the SINR model has begun recently. In \cite{XGB02RTSCTS}, it reports that the common CTS/RTS mechanism, which relies on the assumption that the decodable range of a CTS/RTS message is comparable to the interfering range, is not sufficient to ensure interference-safe transmissions in the SINR model, because the sum of individually insignificant interference power can still be considerably large in the SINR model, and hence, a transmitter can affect very far-off nodes, other than those that can decode its packets. 
Two previous studies akin to this work are our prior work \cite{CCL09csma, FLH10IPCS}. In \cite{CCL09csma}, we proved the scaling law of capacity of CSMA in the SINR model. But the result relies on a simplified model of power-threshold based carrier sensing, which ignores the effect of ordering of local measurements. This work extends that study to consider the effect of ordering in CSMA. 

Previously, \cite{FLH10IPCS} proposed an alternate approach called Incremental-Power Carrier-Sensing (IPCS). In this paper, we compare the static IPCS and CPCS by evaluation and analysis. Note that for benchmarking IPCS with standard carrier sensing (i.e., CPCS), \cite{FLH10IPCS} also used the same carrier sensing threshold optimized for IPCS for CPCS. However, the carrier sensing threshold for IPCS may not be optimal for CPCS. In this paper, we provide a carrier sensing threshold appropriate for standard carrier sensing. Our solution of CPCS can be shown to have comparable performance to IPCS. 

There are other studies \cite{SCS03exposed, JL08hidden} to address a related problem of exposed nodes. We note that there is an inevitable trade-off between hidden node and exposed node problems \cite{JL08hidden}, as a solution addressing one of the problems often causes the other problem. Since there is a lack of thorough study for hidden node problem in a realistic setting of the SINR model, it is difficult to robustly address both problems simultaneously. Our results for static carrier sensing thresholds in this paper solve the former problem and provide a cornerstone for a complete solution for both problems in the SINR model. 

Previous investigations on adaptive mechanisms for wireless networks focused mostly on adjusting the transmission powers of nodes \cite{PC1,PC2}. Our paper here pursues the alternative of adjusting the carrier-sensing threshold of nodes. We note that our approach can maintain backward-compatibility with legacy hardware, when power control is limited (e.g., by the maximum allowable transmit power according to different regulations) or only allowing a small number of discrete levels of transmission power. Low-end devices that have limited form factor will benefit from adaptive carrier sensing mechanisms. In addition, one can achieve better network performance with the same power consumption through adaptively adjusting the carrier sensing threshold, which might not always be the case for power control.

There are prior results considered tuning the carrier sensing thresholds \cite{ZGYCRH04cs,FVV06cs,KLH06cs}, and \cite{SurveyACS} provided a comprehensive survey on adaptive carrier sensing. Some papers also evaluated the effectiveness of carrier-sensing control through experimental testbeds \cite{Testbed1, Testbed2}. \cite{ZGYCRH04cs} studied an upper bound for setting the carrier sensing threshold, which did not consider the effect of ordering and ACKs. \cite{ZGYCRH04cs} also proposed an adaptive carrier sensing threshold scheme that requires distribution of local measurements (SINR values, adaptation decision, etc.) through piggy backing in the ACK. In our adaptation algorithm, nodes do not need any estimation or measurement, they only need to check if they are deprived of opportunities to transmit or are not able to receive ACK for a sustained period. Thus, our algorithm is more direct and simple in terms of detection. \cite{MACOverhead} showed that the MAC overhead has important impact on the choice of carrier-sensing range, and \cite{MultiRate} considered other further factors on the choice of carrier-sensing range (e.g., channel rates, multi-hop forwarding, etc.). There are also studies that focus on topology-controlled networks \cite{TopologyCS} for adaptive physical carrier-sensing control. On the other hand,\cite{FVV06cs,KLH06cs} only considered maximizing the spatial reuse, rather than fairness and starvation ratio in the CSMA networks. In this paper, we jointly consider the goodput and fairness in the network and show that the proposed schemes can strive for a good balance between them for improving the overall network efficiency.

\section{Model and Notations} \label{sec:model}

To guarantee acceptable performance at each node, we provide carrier sensing threshold to guarantee interference-safe transmissions under the SINR model, by which each node can ensure that no interference is caused to other concurrent transmissions no matter how dense the network topology becomes. This provides a basis for further adaptive adjustment of the carrier sensing threshold.
This section presents the formal definition of carrier sensing threshold and hidden node problem under the SINR model.

Consider a set of links ${X}$. For each link $i \in X$, we let $t_i$ be the transmitter, and $r_i$ be the receiver. We also write $t_i$ and $r_i$ as the coordinates of transmitter and receiver, respectively. Let 
\begin{equation}
{\sf dist}(i,j) \triangleq \min(|t_j - r_i|, |r_j - t_i|, |r_j - r_i|, |t_j - t_i|) 
\end{equation}
which is the minimum distance among the transmitters and receivers between a pair of links $i, j$.

To capture the feasibility of interference-safe transmissions in the presence of possible ACKs under the SINR model, we define the following set of feasible states of concurrently transmitting links.

\medskip

\begin{definition}
({\em The set of bi-directional interference-safe feasible states ${\mathscr B}^{}_{\sf P, N}\big[{X},\beta\big] \subseteq 2^X$}): A subset of links ${S}$ are called ``interference-safe'' $(i.e., {S} \in {\mathscr B}^{}_{\sf P, N}\big[{X},\beta\big])$, if and only if for all $i \in {S}$, we have
\begin{equation} \label{eqn:B_ag}
 \frac{{\sf P} |t_i - r_i|^{-\alpha}}{{\sf N} + \underset{j \in {S} \backslash \{ i \}}{\sum} {\sf P} \cdot {\sf dist}(i,j)^{-\alpha}} \ge \beta 
 \end{equation}
 where ${\sf P}$ is the transmission power level, ${\sf N}$ is the background noise level, $\alpha$ is the path-loss coefficient and $\beta$ is the minimal SINR threshold to ensure successful decoding of a packet. 
\end{definition}

The above definition captures the setting of successful receptions of DATA frames not only in one direction, but also in the opposite direction for the successful receptions of ACK frames in the presence of other interfering transmitters.

The set of feasible states ${\mathscr B}^{}_{\sf P, N}\big[{X},\beta\big]$ generalizes the commonly used feasibility condition in the SINR model that only considers uni-directional communications. CSMA is supposed to enable the links in $X$ to operate within the constraint of ${\mathscr B}^{}_{\sf P, N}\big[{X},\beta\big]$, requiring no coordination among the links, by only local interference power measurement. That is, if the subsets of links allowed to transmit simultaneously by a carrier sensing mechanism are always within the set of feasible states ${\mathscr B}^{}_{\sf P, N}\big[{X},\beta\big]$, then the carrier sensing mechanism is called interference-safe (namely, with no hidden node problem).

\subsection{CPCS}

In this section, we first consider a naive definition of feasible states of carrier sensing that ignores the ordering effect by an illustrative example. Then, we provide a more robust definition that can accommodate the ordering effect.

\medskip

\begin{definition}
({\em The set of simple carrier sensing feasible states ${\mathscr C}^{\sf simp}_{\sf P, N}\big[{X},{\sf t}_{\rm cs}\big] \subseteq 2^X$}): A subset of links ${S}$ are permitted by Cumulative-interference-Power Carrier Sensing (CPCS) $(i.e., {S} \in {\mathscr C}^{\sf simp}_{\sf P, N}\big[{X},{\sf t}_{\rm cs}\big])$, if and only~if for all $i \in {S}$, 
\begin{equation}
{\sf N} + \underset{j \in {S} \backslash \{ i \}}{\sum} {\sf P}
|t_j - t_i|^{-\alpha} \le {\sf t}_{\rm cs} 
\end{equation}
where ${\sf t}_{\rm cs}$ is a parameter of carrier sensing threshold.
\end{definition}
In conventional CSMA devices, ${\sf t}_{\rm cs}$ is ${\sf c} + {\sf N}$, where ${\sf c}$ is a constant and does not depend on ${\sf N}, {\sf P}, \beta, \alpha$. 

\medskip

To ensure interference-safe transmissions under static CPCS, we need to find a universal carrier sensing threshold ${\sf t}_{\rm cs}$ such that
${\mathscr C}^{\sf simp}_{\sf P, N}\big[{X},{\sf t}_{\rm cs}\big] \subseteq {\mathscr B}^{}_{\sf P, N}\big[{X},\beta\big]$ for any $X$.
However, ${\mathscr C}^{\sf simp}_{\sf P, N}\big[{X},{\sf t}_{\rm cs}\big]$ does not take into account the ordering of local measurements, which can cause under-estimation of interference by the transmitters that started to transmit earlier.
We next present an example to illustrate this phenomenon.

\medskip

\begin{figure}[htb!] 
\centering \includegraphics[scale=0.75]{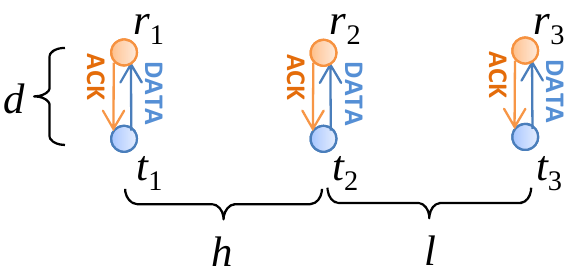}  \caption{\small Three transmitter-receiver pairs arranged in parallel.} \label{fig:ex1} \end{figure} 

\begin{example} ({\em Effect of Ordering}) We consider the scenario in Fig.~\ref{fig:ex1} with a given carrier sensing threshold ${\sf {\sf t}_{\rm cs}}$. The {\em locally measured} interference-and-noise power from the set of existing concurrent transmitters $X$ and the background noise measured at transmitter $x$ is denoted by: \begin{equation} 
{\bf P}_{x}[X] \triangleq {\sf N} + \sum_{z \in X}{\sf P} |z - x|^{-\alpha} 
\end{equation} 
With CPCS, transmitter $x$ proceeds with transmission, only if ${\bf P}_{x}[X] \le {\sf {\sf t}_{\rm cs}}$ for a pre-specified carrier sensing threshold ${\sf {\sf t}_{\rm cs}}$. 
In fact, whatever given value of ${\sf {\sf t}_{\rm cs}}$ for Fig.~\ref{fig:ex1}, there always exist values of $h, l$, such that the local interference power at a transmitter can exceed ${\sf {\sf t}_{\rm cs}}$ unawarely. In this example, assume ${\sf N}=0$ and ${\sf P}=1$. For a given ${\sf {\sf t}_{\rm cs}}$, we set $h, l$ such that they satisfy: 
\begin{equation} 
h^{-\alpha} = {\sf {\sf t}_{\rm cs}}, \quad l^{-\alpha} + (h+l)^{-\alpha} = {\sf {\sf t}_{\rm cs}} 
\end{equation} 
We suppose that the transmitters follows the sequence of local measurements as in Fig.~\ref{fig:ex2}. Initially there is no transmission. First, $t_1$ measures the interference power as ${\bf P}_{t_1}[\varnothing] = 0 \le {\sf {\sf t}_{\rm cs}}$, and proceeds with transmission. Next, $t_2$ measures the interference power as ${\bf P}_{t_2}[\{t_1\}] = {\sf {\sf t}_{\rm cs}}$, and proceeds with transmission. Lastly, $t_3$ measures the interference power as ${\bf P}_{t_3}[\{t_1, t_2\}] = {\sf {\sf t}_{\rm cs}}$, and also proceeds with transmission. However, after $t_3$'s transmission, the interference power at $t_2$ rises to ${\bf P}_{t_2}[\{t_1, t_3\}] > {\bf P}_{t_2}[\{t_1\}] = {\sf {\sf t}_{\rm cs}}$. This highlights the situation where an early transmitter ($t_2$) may be unaware that a later transmitter ($t_3$) can increase its interference power in the course of its transmission, even though the later transmitter ($t_3$) ensures its locally measured cumulative interference power to be below ${\sf {\sf t}_{\rm cs}}$. Therefore, it is insufficient to only consider ${\mathscr C}^{\sf simp}_{\sf P, N}\big[{X},{\sf t}_{\rm cs}\big]$.
\endexam \end{example}

\begin{figure}[htb!] 
\centering
\includegraphics[scale=0.68]{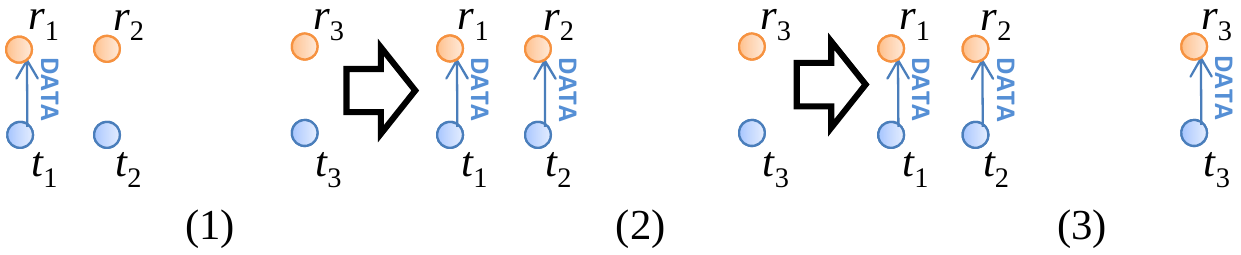}  \caption{\small A sequence of local interference power measurements.}  
\label{fig:ex2} \end{figure}

From the preceding example, it is insufficient to set a universal carrier sensing threshold ${\sf t}_{\rm cs}$ to obtain
${\mathscr C}^{\sf simp}_{\sf P, N}\big[{X},{\sf t}_{\rm cs}\big] \subseteq {\mathscr B}^{}_{\sf P, N}\big[{X},\beta\big]$ for any $X$.
We observe that the ordering effect always happens regardless of the value of carrier sensing threshold ${\sf {\sf t}_{\rm cs}}$. 
%However, depending on the value of ${\sf {\sf t}_{\rm cs}}$, such an effect may or may not prevent interference-safe transmissions. 
This raises the question that how we can set a robust carrier sensing threshold to ensure interference-safe transmissions despite of the ordering effect.  

As a remedy, we consider a feasible state in CPCS with local interference power measurement as a {\em sequence} of transmitters, instead of a subset of concurrent links.

\medskip

\begin{definition}
({\em The set of cumulative-interference-power carrier sensing feasible states ${\mathscr C}^{\sf cpcs}_{\sf P, N}\big[{X},{\sf t}_{\rm cs}\big]$}): We write ${\cal S}$ as a sequence $(i_1, ..., i_{|{\cal S}|})$, where each $i_k \in {X}$. A sequence of transmitters ${\cal S}$ is permitted by  Cumulative-interference-Power Carrier Sensing (CPCS)
$(i.e., {\cal S} \in {\mathscr C}^{\sf cpcs}_{\sf P, N}\big[{X},{\sf t}_{\rm cs}\big])$, if and only if for all~$i_k \in {\cal S}$, 
\begin{equation} \label{eqn:agcs_cond}
{\sf N} + \underset{i_j \in \{i_1, ..., i_{k-1}\}}{\sum} {\sf P}
|t_{i_j} - t_{i_k}|^{-\alpha} \le {\sf t}_{\rm cs} 
\end{equation}
where ${\sf t}_{\rm cs}$ is a parameter of carrier sensing threshold.
\end{definition}

\medskip

That is, when each $i_k$ observes the interference power from other concurrent transmitters that have already started transmissions before is below the carrier sensing threshold ${\sf t}_{\rm cs}$, $i_k$ decides that it is allowed to transmit.
By a slight abuse of notation, we also denote a sequence ${\cal S}$ as the set of its ordered items. For example, if ${\cal S} = (i_1, i_2, i_3)$,  ${\cal S}$ is also denoted as a set: ${\cal S} = \{i_1, i_2, i_3\}$.

Our goal is to study how to configure the carrier sensing threshold ${\sf t}_{\rm cs}$ to ensure interference-safe transmissions. Namely, we aim at finding a robust value of ${\sf t}_{\rm cs}$, such that 
\begin{equation}
{\cal S} \in {\mathscr C}^{\sf cpcs}_{\sf P, N}\big[{X},{\sf t}_{\rm cs}\big] \ \ \Rightarrow \ \ {\cal S} \in {\mathscr B}^{}_{\sf P, N}\big[{X},\beta\big] 
\end{equation}

We aim at finding the maximal value of ${\sf t}_{\rm cs}$ without the complete knowledge of ${X}$. In the following, we suppose that we only know the maximum transmission distance $d_{\max} \triangleq \max_{i \in X} |t_i - r_i|$ in a priori, when setting ${\sf t}_{\rm cs}$. Alternatively, $d_{\max}$ can be interpreted as the maximum tolerable transmission distance for a given ${\sf t}_{\rm cs}$. Note that even without the knowledge of $d_{\max}$, given fixed constants $\alpha, \beta, {\sf N}, {\sf P}$, the upper bound for transmission distance is $|t_i - r_i| \le (\frac{\sf P}{\beta {\sf N}})^{1/\alpha}$.

\subsection{IPCS}

An alternate approach to ensure interference-safe transmissions is to use the {\em pairwise} interference power as in Incremental-Power Carrier-Sensing (IPCS) \cite{FLH10IPCS}. We compare the two approaches in this section. We provide a sufficient condition to eliminate hidden nodes for IPCS, parallel to that of CPCS. Note that although the sufficient condition provided in \cite{FLH10IPCS} only considers the case with zero background noise (i.e., ${\sf N}=0$),  it is straightforward to generalize to the case with non-zero background noise.

The basic idea of IPCS is that $t_i$ can possibly estimate the distance to each individual concurrent transmitter $t_k$ by measuring the change of interference power. Suppose that initially $t_i$ measures the cumulative interference power as:
${\bf P}_{t_i}[X \backslash \{t_k\}]$. Then when $t_k$ transmits, the measured change of interference power at $t_i$ becomes: 
\begin{equation}
\Delta{\sf P}_i = {\bf P}_{t_i}[X] - {\bf P}_{t_i}[X \backslash \{t_k\}] = {\sf P} |t_k - t_i|^{-\alpha}, 
\end{equation}
which reveals the distance $|t_k - t_i|$. Suppose that each transmitter $t_i$ maintains a counter ${\sf cnt}_i$ (initially set as 0). When $t_i$ detects any change $\Delta{\sf P}_i$,
\begin{itemize}
\item if $\Delta{\sf P}_i \ge {\sf P} \cdot {\sf r}_{\rm cs}^{-\alpha}$,
then ${\sf cnt}_i \leftarrow {\sf cnt}_i + 1$.
\item if $\Delta{\sf P}_i \le -{\sf P} \cdot {\sf r}_{\rm cs}^{-\alpha}$,
then ${\sf cnt}_i \leftarrow {\sf cnt}_i - 1$.
\end{itemize}
where ${\sf r}_{\rm cs}$ is a parameter of separation between simultaneous transmitters.  
Transmitter $t_i$ is allowed to transmit only if ${\sf cnt}_i = 0$. 

In an idealized CSMA protocol \cite{BOE10}, congestion avoidance count-down is based on a continuous random variable. Probabilistically, no transmitters will simultaneously start to transmit at the same time by generating exactly the same value of countdown. In such a setting, IPCS realizes a set of carrier sensing feasible states as follows:

\medskip

\begin{definition}
({\em The set of incremental-interference-power carrier sensing feasible states ${\mathscr C}^{\sf ipcs}_{\sf P, N}\big[{X},{\sf r}_{\rm cs}\big] \subseteq 2^X$}): A subset of links are permitted by Incremental-interference-Power Carrier Sensing (CPCS)
$(i.e., {S} \in {\mathscr C}^{\sf ipcs}_{\sf P, N}\big[{X},{\sf r}_{\rm cs}\big])$, if and only~if for all $i, j \in {S}$, 
\begin{equation}
|t_j - t_i| \ge {\sf r}_{\rm cs} 
\end{equation}
where ${\sf r}_{\rm cs}$ is a parameter of separation between simultaneous transmitters.  
\end{definition}

\section{Static Carrier Sensing Mechanisms}

\subsection{CPCS} \label{sec:agg}

This section presents a proven approach to ensure interference-safe transmissions in CPCS by configuring a robust carrier sensing threshold ${\sf t}_{\rm cs}$. 

First, we rely on the notion of {\em interference level} at transmitter $t_i$ with respect to a subset of links ${S}$, which is defined as: 
\begin{equation}
{\bf I}_{t_i}[S, \alpha] \triangleq  \underset{j \in {S}}{\sum} |t_j - t_i|^{-\alpha} 
\end{equation}
We denote the maximal interference level in Euclidean space $\Re^{\sf d}$, subject to ${\mathscr C}^{\sf cpcs}_{\sf P, N}[{X},{\sf t}_{\rm cs}]$ with background noise ${\sf N}$, by: 
\begin{equation}
{\bf I}_{\max, \sf d}^{\sf cpcs}[{\sf t}_{\rm cs}, \alpha, {\sf P}, {\sf N}] \triangleq \max_{X, {\cal S} \in {\mathscr C}^{\sf cpcs}_{\sf P, N}[{X},{\sf t}_{\rm cs}], i \in {\cal S}} {\bf I}_{t_i}[{S}, \alpha]
\end{equation}

With a slight abuse of notation, we write the normalized maximal interference level as:
\begin{equation}
{\bf I}_{\max, \sf d}^{\sf cpcs}[\alpha] \triangleq {\bf I}_{\max, \sf d}^{\sf cpcs}[1, \alpha, 1, 0] 
\end{equation}
Note that ${\bf I}_{\max, \sf d}^{\sf cpcs}[\alpha]$ is a fundamental parameter depending on the dimension of the space ${\sf d}$, and provides a key theoretical tool to determine a robust carrier sensing threshold to ensure interference-safe transmissions.

\medskip

\begin{theorem} \label{thm:ag_imax}
Given a set of links $X$, which lies in Euclidean space $\Re^{\sf d}$, let $d_{\max} \triangleq \max_{i \in X} |t_i - r_i|$.
If we set 
\begin{equation}
{\sf t}_{\rm cs} \le  {\sf P} \Big( 2 d_{\max} +   
\big( \frac{1}{{\bf I}_{\max, \sf d}^{\sf cpcs}[\alpha]}
(\frac{d_{\max}^{-\alpha}}{\beta} - \frac{\sf N}{\sf P}) \big)^{\frac{-1}{\alpha}}
\Big)^{-\alpha}  + {\sf N} 
\end{equation}
then it can ensure interference-safe transmissions in CPCS, namely,
\begin{equation}
{\cal S} \in {\mathscr C}^{\sf cpcs}_{\sf P, N}\big[{X},{\sf t}_{\rm cs}\big] \ \ \Rightarrow \ \ {\cal S} \in {\mathscr B}^{}_{\sf P, N}\big[{X},\beta\big] 
\end{equation}
\end{theorem}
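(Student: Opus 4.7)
The plan is to bound the SINR denominator at the receiver $r_i$ for any $i$ in a CPCS-permitted sequence ${\cal S}$, exploiting (a) a pairwise minimum-separation bound, (b) the scaling invariance of ${\bf I}_{\max,\sf d}^{\sf cpcs}$, and (c) a triangle-inequality step that converts between transmitter-to-transmitter distances and ${\sf dist}(i,j)$. Let $r_0 \triangleq \bigl(({\sf t}_{\rm cs}-{\sf N})/{\sf P}\bigr)^{-1/\alpha}$. First, I would observe that since condition~(\ref{eqn:agcs_cond}) bounds the \emph{sum} over earlier transmitters by $({\sf t}_{\rm cs}-{\sf N})/{\sf P}$, in particular every individual pairwise term is bounded by the same quantity, so $|t_j - t_k| \ge r_0$ for every pair of concurrent transmitters in ${\cal S}$.

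Next, I would derive the scaled bound $\sum_{j \in {\cal S},\,j \ne i} |t_j - t_i|^{-\alpha} \le r_0^{-\alpha}\,{\bf I}_{\max,\sf d}^{\sf cpcs}[\alpha]$. This follows by rescaling coordinates by $1/r_0$: the rescaled sequence satisfies the normalized CPCS condition ($\sum_{j<k} |t'_{i_j} - t'_{i_k}|^{-\alpha} \le 1$) with ${\sf P}=1$, ${\sf N}=0$, ${\sf t}_{\rm cs}=1$, so by the very definition of ${\bf I}_{\max,\sf d}^{\sf cpcs}[\alpha]$ the normalized interference at $t'_i$ is at most ${\bf I}_{\max,\sf d}^{\sf cpcs}[\alpha]$; unscaling gives the claimed bound.

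The key geometric step is to pass from $|t_j - t_i|^{-\alpha}$ to ${\sf dist}(i,j)^{-\alpha}$. Because $|t_i - r_i|, |t_j - r_j| \le d_{\max}$, the triangle inequality applied to each of the four terms in the definition of ${\sf dist}(i,j)$ yields ${\sf dist}(i,j) \ge |t_j - t_i| - 2 d_{\max}$. Combined with $|t_j - t_i| \ge r_0$, this gives the proportional bound
\begin{equation}
|t_j - t_i| - 2 d_{\max} \;\ge\; \frac{r_0 - 2 d_{\max}}{r_0}\,|t_j - t_i|,
\end{equation}
so $\sum_{j \ne i}{\sf dist}(i,j)^{-\alpha} \le \bigl(r_0/(r_0-2d_{\max})\bigr)^\alpha \sum_{j \ne i}|t_j - t_i|^{-\alpha} \le (r_0 - 2 d_{\max})^{-\alpha}\,{\bf I}_{\max,\sf d}^{\sf cpcs}[\alpha]$, where in the last step the $r_0^{\alpha}$ cancels against the $r_0^{-\alpha}$ from the scaled bound.

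Finally, I would substitute this upper bound into the SINR condition (\ref{eqn:B_ag}), use $|t_i - r_i| \le d_{\max}$ in the numerator, and require the resulting expression to be at least $\beta$; solving the inequality for $r_0$ yields $r_0 \ge 2 d_{\max} + \bigl(\tfrac{1}{{\bf I}_{\max,\sf d}^{\sf cpcs}[\alpha]}(d_{\max}^{-\alpha}/\beta - {\sf N}/{\sf P})\bigr)^{-1/\alpha}$, which upon unpacking $r_0 = (({\sf t}_{\rm cs}-{\sf N})/{\sf P})^{-1/\alpha}$ is exactly the stated bound on ${\sf t}_{\rm cs}$. The main obstacle I anticipate is the shrinking step that transforms the additive shift by $2d_{\max}$ into a multiplicative factor; the cleanest way to do this is to leverage the minimum-separation bound $|t_j - t_i| \ge r_0$ as above, but one must also check implicitly that the parameter regime implies $r_0 > 2 d_{\max}$ (equivalently, that the expression inside the outer $(\,\cdot\,)^{-\alpha}$ in the theorem is well-defined and positive), which amounts to requiring $d_{\max}^{-\alpha}/\beta > {\sf N}/{\sf P}$, i.e.\ that transmission is feasible at all under SINR.
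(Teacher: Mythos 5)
Your proof is correct and lands exactly on the stated threshold, following the same overall skeleton as the paper: extract the pairwise separation $|t_j-t_i|\ge r_0=\big(({\sf t}_{\rm cs}-{\sf N})/{\sf P}\big)^{-1/\alpha}$ from the summed carrier-sensing condition, use the triangle inequality with $d_{\max}$ to pass to ${\sf dist}(i,j)$, normalize via the scaling invariance of ${\bf I}_{\max,\sf d}^{\sf cpcs}$, and substitute into the SINR requirement. The one place where you genuinely diverge is the conversion of the additive shift by $2d_{\max}$ into something compatible with the normalized constant ${\bf I}_{\max,\sf d}^{\sf cpcs}[\alpha]$: the paper routes this through Lemma~\ref{lem:bi_intf}, introducing a shifted threshold ${\sf t}'_{\rm cs}$ with $r'_0=r_0-2d_{\max}$ and an intermediate feasibility class $\tilde{\mathscr C}^{\sf cpcs}_{\sf P, N}[X,{\sf t}'_{\rm cs}]$ (where the step from the pointwise bound ${\sf dist}(i,j)\ge r'_0$ to the summed condition defining $\tilde{\mathscr C}$ is left rather terse), whereas you use the elementary pointwise inequality $x-2d_{\max}\ge \frac{r_0-2d_{\max}}{r_0}\,x$ for $x\ge r_0$ to obtain a purely multiplicative bound $\sum_{j\ne i}{\sf dist}(i,j)^{-\alpha}\le (r_0-2d_{\max})^{-\alpha}\,{\bf I}_{\max,\sf d}^{\sf cpcs}[\alpha]$ directly from the transmitter-to-transmitter interference bound. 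The two routes give the identical constant, but yours is more self-contained and makes the critical comparison explicit and verifiable; you also correctly flag the side condition $r_0>2d_{\max}$, which is guaranteed by $d_{\max}^{-\alpha}/\beta\ge {\sf N}/{\sf P}$ exactly as the paper notes after the theorem. No gaps.
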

\begin{proof}
See the Appendix.
\end{proof}

\medskip

Note that $\frac{d_{\max}^{-\alpha}}{\beta} \ge \frac{\sf N}{\sf P}$. Otherwise, $\frac{{\sf P} d_{\max}^{-\alpha}}{\sf N} < \beta$, and the links separating by distance $d_{\max}$ cannot attain successful transmission, even without interferers.

Theorem~\ref{thm:ag_imax} provides a robust value of ${\sf t}_{\rm cs}$ to guarantee interference-safe transmissions regardless of the network topology. Although the value appears to be dependent on the parameters of channel model $\alpha$, our results are relatively conservative, and in the evaluation of Sec.~\ref{sec:eval}, we observe that the threshold is relatively robust in spite of uncertain parameters from the channel model. 

We note that when ${\sf N} = 0$ in particular, we obtain 
\begin{equation}
{\sf t}_{\rm cs}  \le  {\sf P}  \Big( \big( 2 + (\beta {\bf I}_{\max, \sf d}^{\sf cpcs}[\alpha])^{\frac{1}{\alpha}} \big) d_{\max} \Big)^{-\alpha} 
\end{equation}

In the appendix, we provide numerical analysis for ${\bf I}_{\max, \sf d}^{\sf cpcs}[\alpha]$ for 1-D and 2-D cases.
We show that ${\bf I}_{\max, \sf 2}^{\sf cpcs}[4] \approx 7.2$ when $\alpha = 4$ in 2-D case.

\subsection{IPCS} \label{sec:ipcs}
 
Similar to CPCS, we define
\begin{equation}
{\bf I}_{\max, \sf d}^{\sf ipcs}[{\sf r}_{\rm cs}, \alpha, {\sf P}, {\sf N}] \triangleq \max_{X, {\cal S} \in {\mathscr C}^{\sf ipcs}_{\sf P, N}[{X},{\sf r}_{\rm cs}], i \in {\cal S}} {\bf I}_{t_i}[{S}, \alpha]
\end{equation} 
and let the normalized maximal interference level be 
\begin{equation}
{\bf I}_{\max, \sf d}^{\sf ipcs}[\alpha] \triangleq {\bf I}_{\max, \sf d}^{\sf ipcs}[1, \alpha, 1, 0] %\vspace{-3pt}
\end{equation}
 
\begin{theorem} \label{thm:pw_imax}
Given a set of links $X$, which lies in Euclidean space $\Re^{\sf d}$, let $d_{\max} \triangleq \max_{i \in X} |t_i - r_i|$.
If we set 
\begin{equation}
{\sf r}_{\rm cs} \ge 
\bigg(
\frac{1}{{\sf P} {\bf I}_{\max, \sf d}^{\sf ipcs}[\alpha] } \Big( \frac{1}{ \beta}{\sf P} d_{\max}^{-\alpha} - {\sf N} \Big) \bigg)^{-\frac{1}{\alpha}} + 2 d_{\max} 
\end{equation}
then it can ensure interference-safe transmissions in IPCS, namely,
\begin{equation}
{\cal S} \in {\mathscr C}^{\sf ipcs}_{\sf P, N}\big[{X},{\sf r}_{\rm cs}\big] \ \ \Rightarrow \ \ {\cal S} \in {\mathscr B}^{}_{\sf P, N}\big[{X},\beta\big] 
\end{equation}
\end{theorem}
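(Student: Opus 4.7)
The plan mirrors the strategy used for Theorem~\ref{thm:ag_imax}, but swaps CPCS's cumulative-power control for IPCS's pairwise-separation control, reducing everything to the normalized constant ${\bf I}_{\max, \sf d}^{\sf ipcs}[\alpha]$. Fix ${\cal S} \in {\mathscr C}^{\sf ipcs}_{\sf P, N}\big[X,{\sf r}_{\rm cs}\big]$ and any link $i \in {\cal S}$; the goal is to upper-bound the aggregate interference at link $i$ so that the bi-directional SINR condition~(\ref{eqn:B_ag}) is automatically satisfied.

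Using $|t_i - r_i| \le d_{\max}$, the SINR requirement at $i$ reduces to showing $\sum_{j \in {\cal S} \backslash \{i\}} {\sf P} \cdot {\sf dist}(i,j)^{-\alpha} \le {\sf P} d_{\max}^{-\alpha}/\beta - {\sf N}$. Two geometric facts do the rest. First, applying the triangle inequality to each of the four candidate distances inside ${\sf dist}(i,j)$, together with $|t_i - r_i|, |t_j - r_j| \le d_{\max}$, yields the uniform bound ${\sf dist}(i,j) \ge |t_j - t_i| - 2 d_{\max}$, which is strictly positive because the stated ${\sf r}_{\rm cs}$ exceeds $2 d_{\max}$. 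Second, via the identity
\begin{equation*}
(|t_j - t_i| - 2 d_{\max})^{-\alpha} = |t_j - t_i|^{-\alpha} \cdot \left( \frac{|t_j - t_i|}{|t_j - t_i| - 2 d_{\max}} \right)^{\alpha}
\end{equation*}
and the monotonicity of $x \mapsto x/(x - 2 d_{\max})$ for $x > 2 d_{\max}$, the IPCS-feasibility constraint $|t_j - t_i| \ge {\sf r}_{\rm cs}$ uniformly bounds the ratio by ${\sf r}_{\rm cs}/({\sf r}_{\rm cs} - 2 d_{\max})$. Combining this with $\sum_{j \ne i} |t_j - t_i|^{-\alpha} \le {\sf r}_{\rm cs}^{-\alpha} \, {\bf I}_{\max, \sf d}^{\sf ipcs}[\alpha]$, obtained by rescaling $\{t_j\}$ to unit minimum separation and invoking the definition of ${\bf I}_{\max, \sf d}^{\sf ipcs}[\alpha]$, gives $\sum_{j \ne i} {\sf dist}(i,j)^{-\alpha} \le ({\sf r}_{\rm cs} - 2 d_{\max})^{-\alpha} \, {\bf I}_{\max, \sf d}^{\sf ipcs}[\alpha]$. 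Substituting into the sufficient condition and solving for ${\sf r}_{\rm cs}$ reproduces exactly the threshold stated in the theorem.

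The argument is structurally parallel to the CPCS proof and the arithmetic is essentially routine; the main obstacle is the geometric reduction from ${\sf dist}(i,j)$, which is a minimum over four mixed tx/rx distances, to the pure transmitter-transmitter distance that IPCS directly controls. The ratio-bound trick in the second paragraph is what keeps this reduction tight enough for the algebra to collapse cleanly into the closed form stated, rather than introducing slack that would require a strictly larger ${\sf r}_{\rm cs}$.
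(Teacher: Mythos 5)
Your proof is correct and takes essentially the same route as the paper, whose own "proof" of this theorem simply defers to the CPCS argument: use the triangle inequality to shift the mixed transmitter/receiver distances in ${\sf dist}(i,j)$ by $2d_{\max}$, rescale the IPCS-feasible transmitter configuration to invoke the normalized constant ${\bf I}_{\max, \sf d}^{\sf ipcs}[\alpha]$, and close with the same algebra. Your ratio-bound step is a slightly cleaner (and more explicitly justified) way of passing from the per-pair bound ${\sf dist}(i,j) \ge |t_j - t_i| - 2 d_{\max}$ to the aggregate bound $({\sf r}_{\rm cs} - 2 d_{\max})^{-\alpha}\, {\bf I}_{\max, \sf d}^{\sf ipcs}[\alpha]$ than the analogue of Lemma~\ref{lem:bi_intf}, but it is the same underlying idea.
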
 
\begin{proof}
The proof is similar to the one of Theorem~\ref{thm:ag_imax}. However, there is an alternate proof considering the case with zero background noise (i.e., ${\sf N}=0$) in \cite{FLH10IPCS},  which can be generalized to the case with non-zero background noise.
\end{proof}

\section{Adaptive Carrier Sensing Mechanisms}

In this section, we propose adaptive mechanisms to adjust the default parameters dynamically based on the feedback of nearby transmissions.

The basic idea is that, rather than as a universal value for every node, the carrier sensing parameter should be adapted differentially at each node according to its local network topology. For example, when the nodes are sparsely located, the carrier sensing parameter should be more lenient to allow more aggressive transmissions in typical settings. 

The principle for adjusting carrier sensing parameter boils down to balancing the local level of hidden nodes (HN) and exposed nodes (EN) in the network. We propose adaptive mechanisms that dynamically adjust the carrier sensing threshold ${\sf t}_{\rm cs}$ and ${\sf r}_{\rm cs}$ for CPCS and IPCS respectively with respect to the detected level of hidden nodes (HN) and exposed nodes (EN) at each node.

Let ${\sf t}^\ast_{\rm cs} $ and ${\sf r}^\ast_{\rm cs}$ be the default values of carrier sensing parameters for the static carrier sensing mechanisms, which can be obtained in Theorems~\ref{thm:ag_imax} and \ref{thm:pw_imax}. We first describe adaptive CPCS. We assume that the transmission power of nodes is initialized to be the maximum of the minimum required transmission powers for all links to remain connected in the network, and the initial carrier sensing threshold ${\sf t}^\ast_{\rm cs} $ is adjusted based on such a power level. 

For any node $j$ in the network, it keeps detecting the occurrence of two events throughout the process and adjusts the carrier sensing threshold as follows:
\begin{enumerate}

\item
If the node $j$ finds that it has not been allowed to transmit for $n_{\rm slot}$ or more consecutive packet slots, it is probable that it is suffering from the EN problem, and hence, it will attempt to increase its carrier sensing threshold ${\sf t}_{\rm cs} $ by $\delta_{s}$ to avoid the EN as long as it does not exceed the maximum carrier sensing threshold ${\sf t}_{\max}$. 

\item
On the other hand, if the node $j$ fails to receive ACK for $m_{\rm ack}$ or more consecutive transmitted packets, it could be suffering from the HN problem, and hence, it will broadcast a ``HN warning" signal through the beacon message to all $h_{\rm W}$-hop neighbors. 

\item
For every $h_{\rm W}$-hop neighbor that receives the HN warning signal, it will decrease its carrier sensing threshold ${\sf t}_{\rm cs} $ by $\delta_{s}$ as long as it is above the initial threshold ${\sf t}^\ast_{\rm cs} $. Otherwise, ${\sf t}_{\rm cs} $ remains unchanged. 

\end{enumerate}

The HN warning signal can be implemented by some simple modifications in the IEEE 802.11 MAC frame. Specifically, two fields need to be added or defined in the MAC header. First, a time-to-live (TTL) field \cite{TTL} is added to keep track of the $h_{\rm W}$ hops traversed by the HN warning signal (TTL field is available for IP datagram; here we are emulating that field at the MAC layer). Second, a broadcast mode for the HN warning signal is indicated in the MAC frame. Each time a HN warning signal is received, the TTL field will be decremented by 1 and the warning signal will be forwarded (further broadcasted) only if TTL is still larger than 0. Finally, through jointly considering the sequence number and source MAC address of the warning packet, a warning signal can be uniquely identified, so that duplicated warnings (e.g., if there are broadcast loop in the network) can be identified and dropped to avoid unnecessary carrier-sensing threshold adjustment.

The pseudo-code of adaptive CPCS is presented in Algorithm \ref{pseudocode}, and Table~\ref{para} summarizes the notations.

\begin{algorithm}
\caption{Adaptive CPCS}
\begin{algorithmic}[1] 
\State Initialize ${\sf t}_{\rm cs}={\sf t}^\ast_{\rm cs};$   
\For {$t = 1:T$}
\If{node $j$ fails to receive ACK for $m_{\rm ack}$\\ \qquad or more consecutive packets}
\LeftComment{{\rm\em Possible HN problem}}
\State Node $j$ broadcasts ``HN warning" to all \\ \qquad \quad $h_{\rm W}$-hop neighbors;
\EndIf
\If {node $j$ was not allowed to transmit for $n_{\rm slot}$\\ \qquad or more consecutive packets}
\LeftComment{{\rm\em  Possible EN problem}}
\If {${\sf t}_{\rm cs} <= {\sf t}_{\max}-\delta_{\rm s}$}
\State ${\sf t}_{\rm cs} \leftarrow {\sf t}_{\rm cs}+\delta_{\rm s};$
\EndIf
\ElsIf {node $j$ receives any ``HN warnings"\\ \qquad from its neighbors}
\LeftComment{{\rm\em It could be a HN to its neighbors}}
\If {${\sf t}_{\rm cs} \ge {\sf t}^\ast_{\rm cs}+\delta_{\rm s}$}
\State ${\sf t}_{\rm cs} \leftarrow {\sf t}_{\rm cs}-\delta_{\rm s};$
\EndIf
\Else
\State ${\sf t}_{\rm cs}$ remains unchanged;
\EndIf 
\EndFor
\end{algorithmic}
\label{pseudocode}
\end{algorithm}

The sensitivity of this adaptive algorithm towards HN and EN can be adjusted via the parameters $m_{\rm ack}$, $n_{\rm slot}$, $h_{\rm W}$, and $\delta_{s}$. For instance, a larger $n_{\rm slot}$ will reduce the false alarms for EN, while a larger $m_{\rm ack}$ and smaller $h_{\rm W}$ will reduce the false alarms for HN. The step size $\delta_{s}$ for adjusting the carrier sensing threshold will also affect the sensitivity of the algorithm and hence the performance. 

In addition, the above carrier sensing threshold adaptation process can also be applied to IPCS to configure the carrier sensing range ${\sf r}_{\rm cs} $ accordingly, such that the maximum interference levels at a transmitter by CPCS and IPCS reach the same level. For adaptive IPCS, we map ${\sf t}_{\rm cs}$ in Algorithm \ref{pseudocode} to ${\sf r}_{\rm cs}$ by the following equation:
\begin{equation} \label{eqn:TCStoRCS}
{\sf t}_{\rm cs} =  {\sf P} ({\sf r}_{\rm cs} )^{-\alpha} + {\sf N} \\
\quad \Rightarrow \quad {\sf r}_{\rm cs} =  \big( \frac{{\sf t}_{\rm cs} - {\sf N}}{\sf P}   \big)^\frac{-1}{\alpha}
\end{equation}
The above equation relates the receiver's interference power level (${\sf t}_{\rm cs}$) from a distance ${\sf r}_{\rm cs}$, when there is one interfering transmitter.

\begin{table}
	\centering
	\caption{Table of Parameters.}
    \begin{tabular}{  c | p{5cm} }
    \hline\hline
    Parameter & Description  \\ \hline
    \ $\delta_{\rm s}$ & The step size for the carrier sensing threshold adaptation  \\ \hline
    \ ${\sf t}^\ast_{\rm cs} $ & The initial carrier sensing threshold identified by static CPCS  \\ \hline
    \ ${\sf t}_{\max}$ & The maximum allowed carrier sensing threshold \\ \hline
    \ $m_{\rm ack}$ & The number of consecutive transmissions with no ACK received\\ \hline
    \ $n_{\rm slot}$ & The number of consecutive packet slots prevent to transmit\\ \hline
    \ $h_{\rm w}$ & The number of hops for propagation of ``HN warning" \\ \hline\hline
    \end{tabular}
    \label{para}
\end{table}

\section{Performance Evaluation} \label{sec:eval}

We implemented static and adaptive versions of CPCS and IPCS in a simulation test-bed in MATLAB considering infrastructure-less WiFi networks. We consider the IEEE 802.11 standard, and MAC layer protocols such as CSMA/CA and distributed coordination function (DCF) are implemented. The common physical layer data rate is 11 Mbps. The slot time is 20$\mu$s, and the SIFS and DIFS are 10$\mu$s and 50$\mu$s respectively. The SIR requirement is 20dB. Table~\ref{parasetting} summarizes the settings of the simulation.

Specifically, we look at the aggregated goodput, packet failure rate, and Jain's fairness index defined by
\begin{equation}
\mbox{Jain's fairness index} = \frac{(\sum_{i=1}^n \lambda_i)^2}{n\sum_{i=1}^n \lambda_i}
\end{equation}
where $\lambda_i$ is the goodput of link $i$ in the total $n$ links. A larger Jain's fairness index implies the fairer distribution of goodput. We also study the phenomenon of starvation, namely the portion of links with low goodput, in adaptive CPCS and IPCS. We define node density to be the average number of nodes per transmission range. We simulated up to 1000 instances with respect to different node density in each simulation.

\subsection{Static CPCS and IPCS} 

In this section, we compare the performance of static CPCS and IPCS. We consider settings with 50 to 300 links in a 3000m$\times$3000m area. The length of a link follows uniform distribution between 10m and 250m. We consider two types of network topologies: (1) random networks where the coordinates of nodes are uniformly distributed in the space, and (2) clustered networks where the links are randomly distributed around certain clusters in the space. Clustered networks are more skew in node density.

We set the carrier sensing parameters in CPCS and IPCS to prevent hidden nodes, according to Theorems~\ref{thm:ag_imax} and \ref{thm:pw_imax}. Because of the absence of hidden nodes, there is zero packet failure rate. We evaluated the aggregate goodput and Jain's fairness index.

For $\alpha = 4$, we plot the performance against the number of links in Fig.~\ref{fig:APvsIP}. For aggregate goodput, we observe that CPCS performs comparably to IPCS in the uniformly random networks, but outperforms IPCS in the clustered networks. In general, both CPCS and IPCS perform significantly better in the clustered networks, because of denser node distribution in the clustered networks. For fairness, CPCS outperforms IPCS significantly, because CPCS depends on the cumulative interference in the neighborhood, rather than pairwise interference that can result in unbalanced transmissions.

In comparison with traditional 802.11 CSMA (which sets the carrier sensing threshold as 20 dB above the background noise level), CPCS is more conservative in order to prevent hidden node problem. We observe that traditional 802.11 CSMA can cause a high packet failure rate when the background noise is large.

\begin{figure}[htb!] \vspace{-0pt}
\hspace{-10pt}
    \includegraphics[scale=0.45]{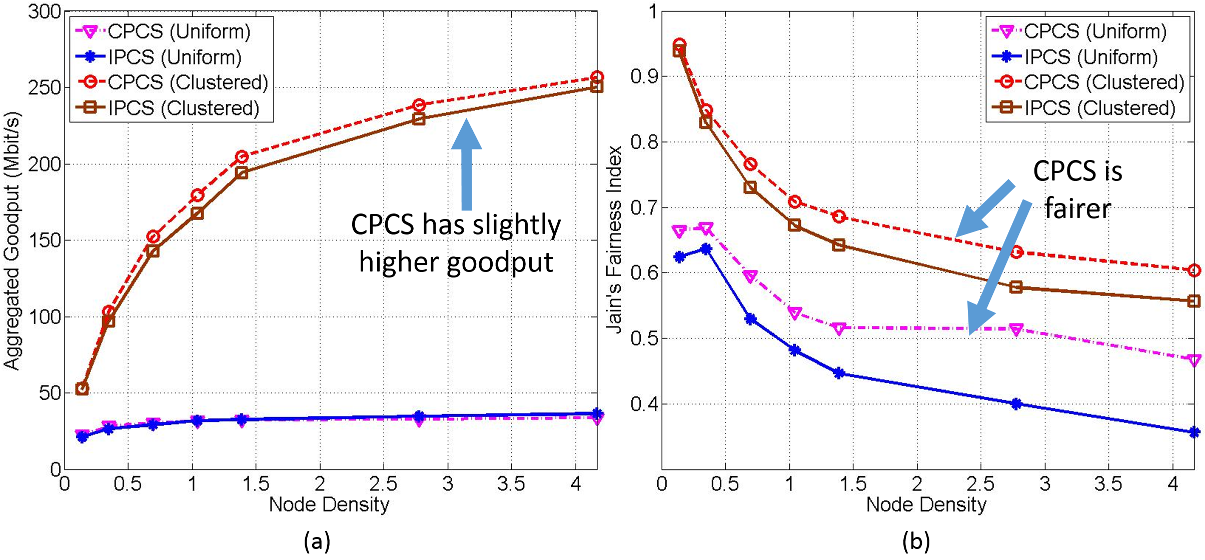}
     \vspace{-5pt}
    \caption{For $\alpha = 4$, (a) aggregated goodput and (b) Jain's fairness index for CPCS and IPCS for uniformly random networks and clustered networks against the node density.} \label{fig:APvsIP} 
\end{figure}

\begin{figure*}[htb!] 
\centering
\includegraphics[scale=0.55]{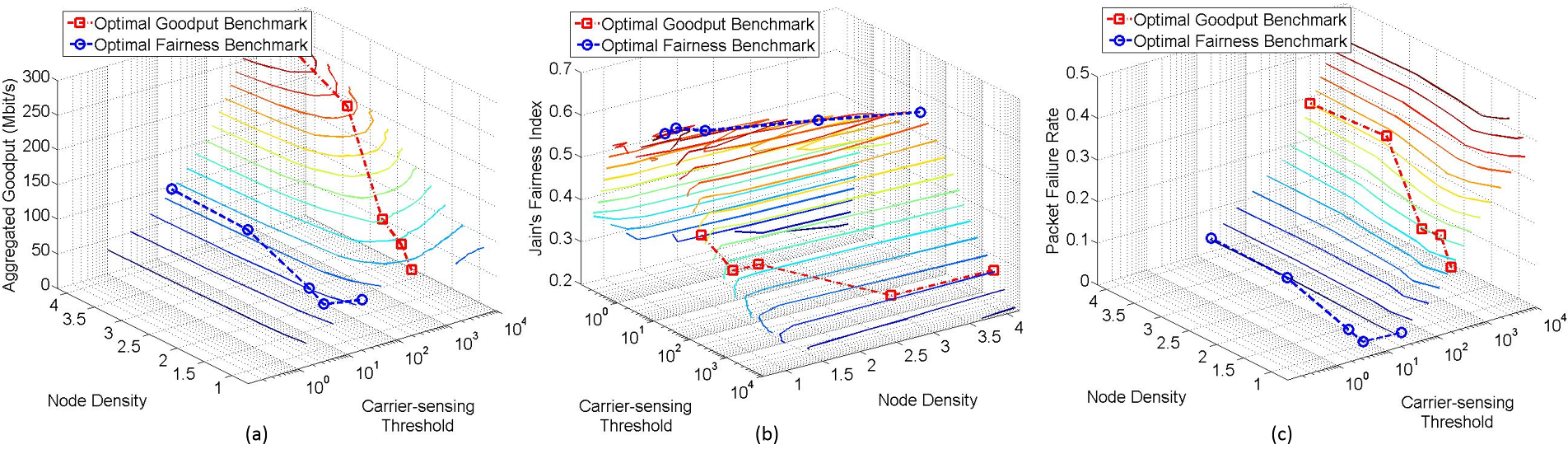} \vspace{-5pt}
    \caption{3D contour plots of (a) aggregated goodput; (b) Jain's fairness index; and (c) packet failure rate ($\alpha = 4$, $K_a = \infty$, $\delta_{\rm s}/{\sf t}^\ast_{\rm cs}  = 20$), with respect to specific carrier sensing threshold ${\sf t}_{\rm cs}$ and node density $n$.} \label{fig:3Dplot} 
\end{figure*}

\subsection{Adaptive CPCS and IPCS}

In this section, we compare the performance of adaptive CPCS and IPCS. 
The maximum carrier sensing threshold ${\sf t}_{\max}$ and the adjustment step size $\delta_{\rm s}$ control the scale and rate of change of the adaptation process, respectively. In our simulation, we set ${\sf t}_{\max}/{\sf t}^\ast_{\rm cs}=10^4$ and $\delta_S/{\sf t}^\ast_{\rm cs} =20$, because we observe that such a range and step size allow us to identify the optimal performance (goodput and fairness) with respect to the carrier sensing threshold within a reasonable number of iterations.

\begin{table}
	\centering
	\caption{Simulation Parameters.}
    \begin{tabular}{ l | l || l | p{1.5 cm} }
    \hline\hline
    Parameter & Value & Parameter & Value \\ \hline
    \# links & 50 - 500 & $\alpha$ & 3, 4  \\ \hline
    Packet size & 1460 bytes & $CW_{\min}$ & 31  \\ \hline
     ${\sf t}_{\max}/{\sf t}^\ast_{\rm cs} $ & $10^4$ & $\delta_{\rm s}/{\sf t}^\ast_{\rm cs} $ & 20
     \\ \hline
  	$m_{\rm ack}$ & 1, 2 & $n_{\rm slot}$ & 3, 6  \\ \hline
  	$h_{\rm W}$ & 1-10 & $K_a$ & 0.1, 1, 10, $\infty$\\ \hline\hline
    \end{tabular}
    \label{parasetting}
\end{table}

\subsubsection{Benchmarks}

To obtain the insight of performance of adaptive CPCS and IPCS, we compare them with certain benchmarks in ideal situations.
We consider uniformly distributed random network topologies.

First, we evaluated the performance (e.g., aggregate goodput, Jain's fairness index, failure rate) under specific values of ${\sf t}_{\rm cs}$. Fig. \ref{fig:3Dplot} shows the 3D contour plots of how the performance metrics vary with respect to the node density and carrier sensing threshold ${\sf t}_{\rm cs}$, when we set a uniform threshold ${\sf t}_{\rm cs}$ in all the nodes. 

Next, we identify two benchmarks in Fig. \ref{fig:3Dplot}: (1) {\em optimal goodput benchmark}, such that ${\sf t}_{\rm cs}$ attains the highest aggregate goodput with the respect of node density, (2) {\em optimal fairness benchmark}, such that ${\sf t}_{\rm cs}$ attains the highest Jain's fairness index with the respect of node density. These benchmarks represent the ideal settings of carrier sensing parameters when the network topology is uniformly random. We note that the sandwiched region (as depicted in Fig. \ref{fig:3Dplot2}) between two benchmarks represents the operational regime that adaptive CPCS and IPCS should operate. We remark that, unlike the benchmarks using a uniform carrier sensing threshold ${\sf t}_{\rm cs}$, adaptive CPCS and IPCS allow heterogeneous carrier sensing thresholds at different nodes, and hence, can attain better performance in certain cases.

\begin{figure}[htb!]   
\centering
    \includegraphics[scale=0.55]{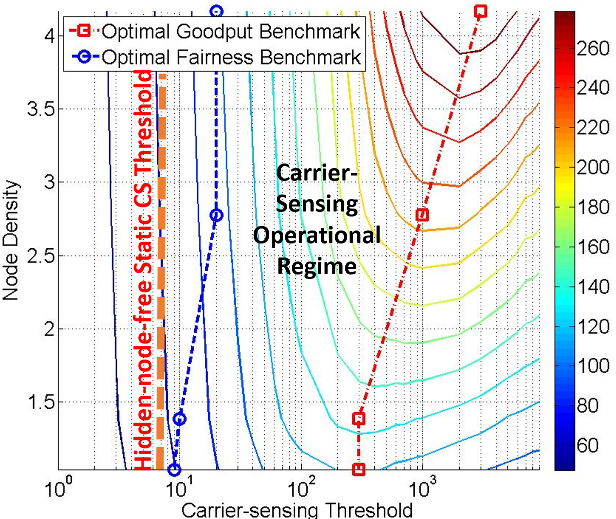}
	\vspace{-5pt}
    \caption{The top view of Fig.~\ref{fig:3Dplot} (a), with the optimal goodput benchmark and  optimal fairness benchmark. The plot also shows the hidden-node-free carrier sensing threshold for static CPCS.} \label{fig:3Dplot2} 
\end{figure}

\begin{figure*}[htb!] 
\centering
\includegraphics[scale=0.55]{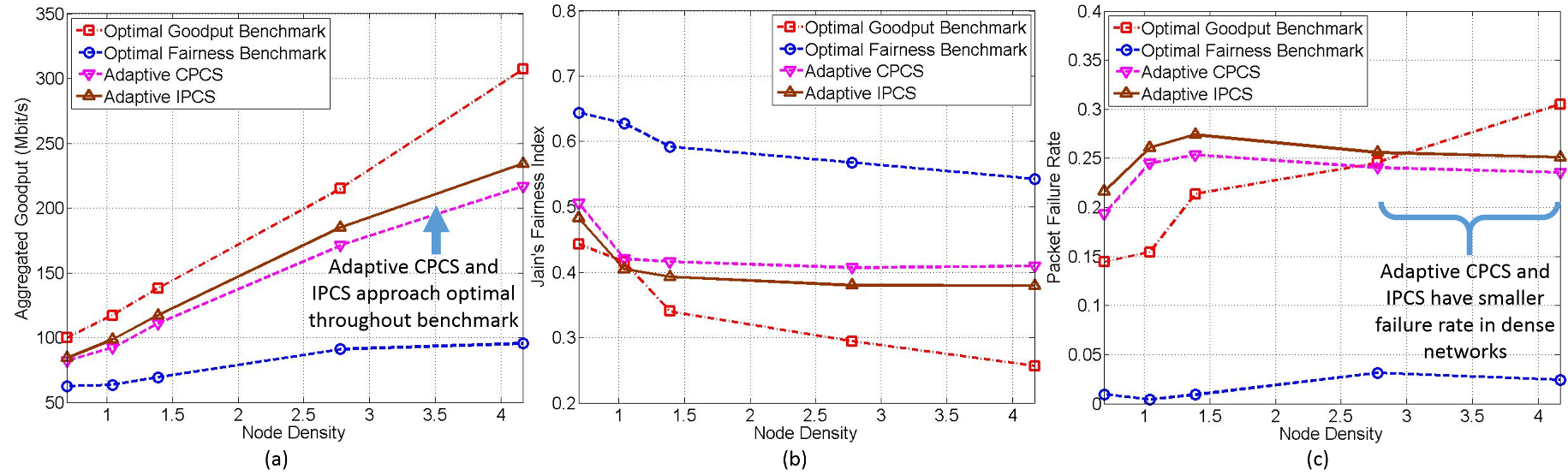} \vspace{-5pt}
    \caption{Plots of (a) aggregated goodput; (b) Jain's fairness index; and (c) packet failure rate for the optimal goodput benchmark, the optimal fairness benchmark, Adaptive CPCS, and Adaptive IPCS ($\alpha = 4$, $K_a =  \infty$, $\delta_{\rm s}/{\sf t}^\ast_{\rm cs}  = 20$, $m_{\rm ack}=2$, $h_{\rm W}=1$).} \label{fig:2Dsideplot2} 
\end{figure*}

\begin{figure*}[htb!]  
\centering
\includegraphics[scale=0.55]{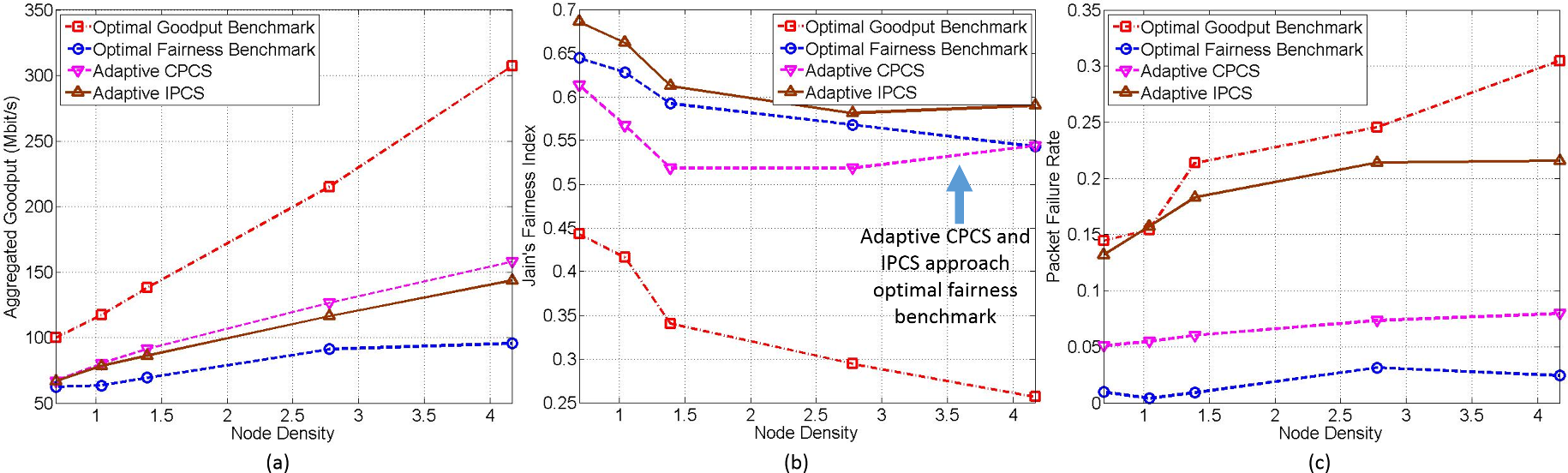} \vspace{-5pt}
    \caption{Plots of (a) aggregated goodput; (b) Jain's fairness index;  and (c) packet failure rate for the optimal goodput benchmark, the optimal fairness benchmark, Adaptive CPCS, and Adaptive IPCS ($\alpha = 4$, $K_a =  \infty$, $\delta_{\rm s}/{\sf t}^\ast_{\rm cs}  = 20$,  $m_{\rm ack}=1$, $h_{\rm W}=10$).} \label{fig:2Dsideplot}  
\end{figure*}

\subsubsection{Observations}

Figs. \ref{fig:2Dsideplot2}-\ref{fig:2Dsideplot} show the aggregated throughout, Jain's fairness index and packet failure rate for adaptive CPCS and IPCS, as compared with the optimal goodput benchmark and optimal fairness benchmark, for two settings of parameters: ($\alpha = 4$, $K_a =  \infty$, $\delta_{\rm s}/{\sf t}^\ast_{\rm cs}  = 20$,  $m_{\rm ack}=2$, $h_{\rm W}=1$) in Fig. \ref{fig:2Dsideplot2} and ($\alpha = 4$, $K_a =  \infty$, $\delta_{\rm s}/{\sf t}^\ast_{\rm cs}  = 20$, $m_{\rm ack}=1$, $h_{\rm W}=10$) in Fig. \ref{fig:2Dsideplot}.

In general, we observe that the adjustment triggering parameters in Algorithm~1, such as $m_{\rm ack}$ and $h_{\rm W}$, can be used to characterize the objective of the adaptation process. For instance, a smaller value of $m_{\rm ack}$ and a larger value of $h_{\rm W}$ can generate more HN warnings, and hence, propagate them further, which suppresses HN warnings more aggressively and favours fairness instead of goodput.

In particular, we observe that the adaptive mechanisms can achieve up to about 80\% of the aggregate goodput benchmark (Fig. \ref{fig:2Dsideplot2} (a)), and have better performance in terms of fairness and failure rate (especially when the node density is high). When comparing the adaptive CPCS and IPCS, they have very close performance. Specifically, adaptive IPCS performs slightly better in terms of aggregated goodput at the expense of fairness and failure rate.

When $m_{\rm ack}=1$ and $h_{\rm W}=10$ in Fig. \ref{fig:2Dsideplot}, adaptive CPCS can achieve similar fairness as the fairness benchmark while 50\% better goodput in high density scenarios. In addition, we observe that through adjusting the triggers (e.g., $m_{\rm ack}$ and $h_{\rm W}$) in Algorithm \ref{pseudocode}, the adaptive mechanisms slightly favor fairness. 
%By treating the goodput benchmark and fairness benchmark for static CPCS as boundaries, we can identify the desirable operating region for the adaptive mechanisms as shown in Fig. \ref{fig:3Dplot2}. There is no mean of operating beyond the region as both the goodput and fairness performance drop together.

\subsubsection{Starvation}

Starvation is an important phenomenon, in which some links receive persistently lower goodput than others, despite high aggregate goodput over all the links \cite{KK12csma}. A desirable CSMA protocol should attain balanced goodput in all the links.  We plot the histogram of link goodput distribution of 500 simulated links as illustrated in Fig.~\ref{fig:starvation} (a). We observe that 65\% of the links are starved in the optimal goodput benchmark, whereas only 42\% are starved in adaptive CPCS and IPCS. To compare starvation, we introduce the notion of link starvation ratio, defined as
 \begin{equation}\label{eqn:StarvationRatio}
S=\dfrac{\text{node density with goodput} \leq \rho}{\text{total node density\ } n}
  \end{equation}
 
Here, we set the threshold $\rho$  to be the median goodput of adaptive CPCS, which is used as a reference of comparison. In Fig.~\ref{fig:starvation} (b), we plot the starvation ratio of the optimal goodput benchmark, as compared with adaptive CPCS and IPCS. 
We observe that adaptive CPCS and IPCS are effective in reducing starvation. Especially in dense networks, the starvation ratio is moderate when the node density increases.

\begin{figure*}[htb!] %\vspace{-5pt}
\centering
\includegraphics[scale=0.55]{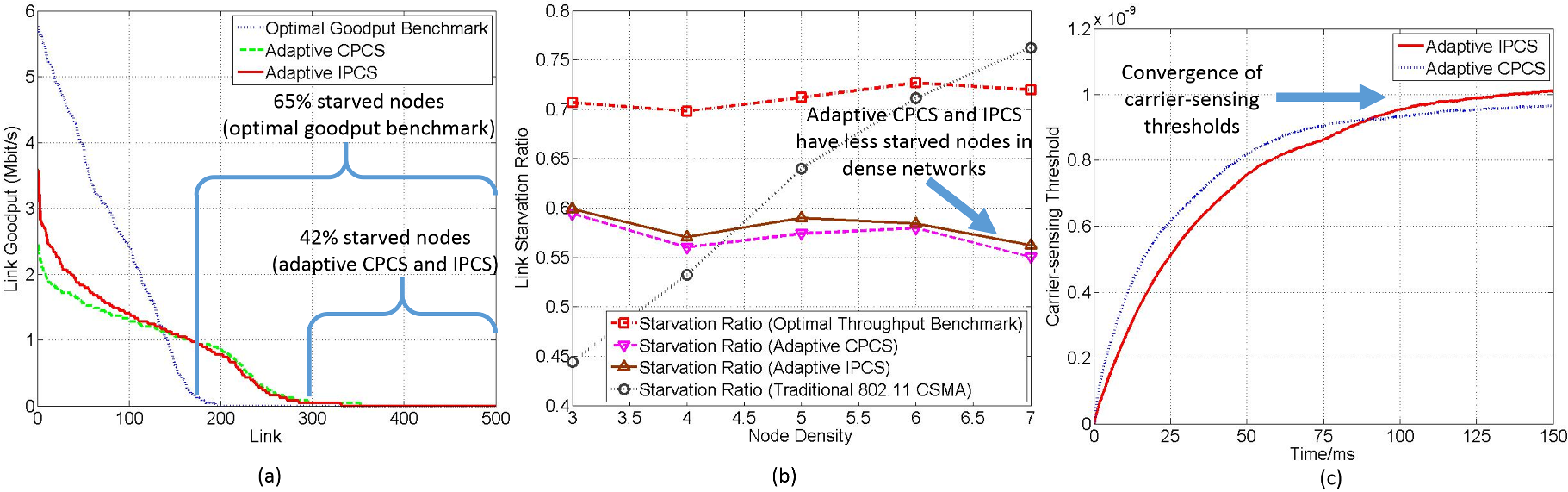} \vspace{-5pt}
    \caption{The goodput histogram for a scenario of 500 links that are (a) sorted by goodput; and (b) the corresponding starvation ratio up with respect to node density ($\rho$ is the median goodput of adaptive CPCS, $\alpha = 4$, $K_a =  \infty$, $\delta_{\rm s}/{\sf t}^\ast_{\rm cs}  = 20$). (c) shows the convergence behavior of adaptive CPCS and IPCS. } \label{fig:starvation} 
\end{figure*}

\subsubsection{Convergence}

Fig.~\ref{fig:starvation} (c) shows the convergence behavior of carrier sensing thresholds in adaptive CPCS and IPCS considering a scenario with 300 links. We observe that the convergence of thresholds converge to stable states rapidly after about 80 ms, which appears to be robust in the presence with moderate degree of mobility in the networks.

\begin{figure*}[htb!] %\vspace{-25pt}
\centering
\includegraphics[scale=0.55]{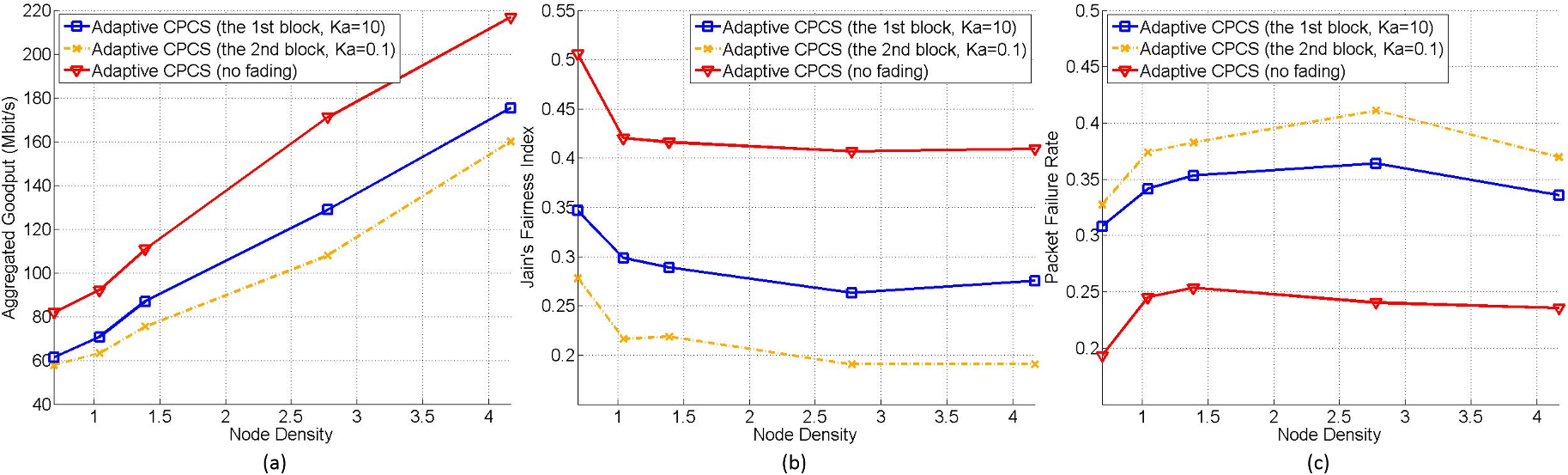} \vspace{-5pt}
    \caption{(a) Aggregated goodput; (b) Jain's fairness index; and (c) packet failure rate for optimal goodput benchmark and adaptive CPCS in the two-block fading scenario ($\alpha = 4$, first block: $K_a = 10$, second block: $K_a = 0.1$ $\delta_{\rm s}/{\sf t}^\ast_{\rm cs}  = 20$).} \label{fig:TwoBlock} 
\end{figure*}

\subsubsection{Dynamic Channel with Fading}

We also evaluated the robustness of the proposed adaptive mechanisms under a dynamic wireless communication channel with fading. To model a dynamic wireless communication channel with fading, we considered two random fading gains in two time blocks within the simulation period. The small-scale fading gain $\beta$ is described by the following distribution under Rician fading:
   \begin{multline}\label{eqn:Ricianfading} 
    p(\beta)=(1+K_a)\exp(-K_a)\exp(-(1+K_a)\beta)\cdot \\
        I_0(\sqrt{4K_a(1+K_a)\beta})
\end{multline}
We model the severity of fading in the channel through adjusting the Rician parameter $K_a$ that describes the ratio between the power for the line-of sight path and the power for other scattered paths. The larger the value of $K_a$, the less severe the fading is. That is, when $K_a=0$, it becomes Rayleigh fading; while $K_a=\infty$ represents no fading. $I_0(.)$ denotes the zeroth-order modified Bessel function of the first kind.  

Fig.~\ref{fig:TwoBlock} shows the performance of adaptive CPCS under such two-block fading channel with $K_a = 10$ in the first block and $K_a = 0.1$ (much closer to Rayleigh fading) in the second block. In the presence of fading, the performance of adaptive CPCS inevitably degrades, because the adaptation process will be obstructed by the variability of interference power measurement. 
In general, the performance of the adaptive mechanisms under such dynamic channel is better when the network becomes denser. After two blocks with severely different fading levels, adaptive CPCS can still achieve more than 70\% of the optimal goodput benchmark, and with fairness and failure rate close to that of the benchmark when the node density is high (more than 3).

\section{Discussions}

The implementation of the proposed carrier sensing threshold adjustment algorithms on practical systems can be achieved through the Clear Channel Assessment (CCA) in the IEEE 802.11 standard \cite{IEEE802112007}. In the standard, there are two modes of carrier sensing: 
\begin{enumerate}

\item[(i)] carrier sensing by energy detection (ED), and 

\item[(ii)] carrier sensing by frame detection and interpretation of the network allocation vector (NAV) in the MAC header therein \cite{JL08hidden}.

\end{enumerate}

For (i), the medium is declared to be busy as long as the detected power level is above a certain threshold, regardless of whether an 802.11 frame can be decoded or not. For (ii), an 802.11 MAC frame must first be detected and at least the header must be decodable. The NAV in a DATA frame typically specifies that the medium should be interpreted as busy for the duration of the DATA frame plus SIFS plus the duration of an ACK that follows (if RTS/CTS mode is turned up, then the durations of the RTS, CTS plus two more SIFS should also be added). 

IPCS and CPCS in this paper will only make use of carrier sensing by ED (i.e., (i)) \cite{FLH10IPCS}. Assuming RTS/CTS is not used, receivers that detect energy levels above the CS threshold can automatically add the durations of a SIFS and ACK as a safety measure. 

The adaptive mechanisms require the distribution of HN (hidden node) warning signal to adjacent nodes, which can be implemented by simple modifications in the IEEE 802.11 MAC frame. Specifically, two fields need to be added or defined in the MAC header, including a time-to-live (TTL) field to keep track of the number of hops traversed by the HN warning signal, and a bit to indicate a broadcast mode for the HN warning signal. Currently, certain MAC frame header fields are not commonly used (e.g., power management indication bit), which can be reused for secondary applications.

The parameters of the adaptive mechanisms and the network environment control the signaling overhead and convergence rate of the proposed algorithm. The signaling overhead (HN Warnings) in the algorithm depends on the node density (average number of nodes per transmission range) and $h_{\rm W}$ (number of hops the warnings propagate). When $h_{\rm W}=1$, we observe a linear relationship between the number of HN Warnings and the number of successful packet receptions, as the warnings only propagate one hop. The ratio is about 0.14 and 0.17 for adaptive CPCS and IPCS respectively. The size of HN warning is assumed to be small (e.g., $\approx$200 bits). Therefore, when one packet (with a size of $1460\times 8=11680$ bits) is successfully transmitted, the average signaling overhead would be no more than $0.17\times 200=34$ bits, which is rather moderate, as compared with the data traffic.

%The implementation of the proposed carrier sensing threshold adjustment algorithms to practical systems can be achieved through the Clear Channel Assessment (CCA) in the IEEE 802.11 standard \cite{IEEE802112007}. CCA involves two major functions - carrier sense and energy detection. Carrier sense is the ability to identify the duration that the channel will be busy through the Physical Layer Convergence Protocol (PLCP) header, while energy detection is the ability to detect any wireless signal energy that is above a ED threshold, which is usually set to be 20 dB above the minimum receiver sensitivity. In practical wireless devices, parameters such as the carrier sensing threshold (CCA value) and the result of carrier sensing (i.e., whether the carrier is busy or idle) can be retrieved or configured through the firmware or registers of the hardware.

\section{Conclusion} \label{sec:concl}

To mitigate the interference among simultaneous transmitting wireless devices (e.g., for Internet-of-things), more effective CSMA protocols are proposed in this paper. The interactions between links in realistic CSMA networks are affected by the special properties attributed to SINR, effects of arbitrary ordering of local measurements, and ACK frames. This paper presents a viable standard-compatible solution to ensure interference-safe transmissions by determining a robust interference-safe carrier sensing threshold in Cumulative-interference-Power Carrier-Sensing (CPCS). Moreover, we present adaptive approaches that adjusts the carrier sensing thresholds dynamically based on the feedback of nearby transmissions. We provide extensive simulation studies to evaluate the starvation ratio, fairness, and goodput performance of our approaches.

\bibliographystyle{IEEEtran}
\bibliography{paperbib}
%\input{bio}

%\clearpage
\section{Appendix}

\subsection{Proofs}

Define the {\em bi-directional} interference level at link $i$ with respect to a subset of links ${S}$ as: 
\begin{equation}
{\bf B}_{i}[S, \alpha] \triangleq  \underset{j \in {S}\backslash\{i\}}{\sum} {\sf dist}(i,j)^{-\alpha}  
\end{equation}
We denote the maximal bi-directional interference level in Euclidean space $\Re^{\sf d}$, subject to ${\mathscr C}^{\sf cpcs}_{\sf P, N}[{X},{\sf t}_{\rm cs}]$ with background noise ${\sf N}$, by: 
\begin{equation}
{\bf B}_{\max, \sf  d}^{\sf cpcs}[{\sf t}_{\rm cs}, \alpha, {\sf P}, {\sf N}] \triangleq \max_{X, {\cal S} \in {\mathscr C}^{\sf cpcs}_{\sf P, N}[{X},{\sf t}_{\rm cs}], i \in {\cal S}} {\bf B}_{i}[{S}, \alpha] 
\end{equation}

\medskip

\begin{lemma} \label{lem:bi_intf}
Let $d_{\max} \triangleq \max_{i \in X} |t_i - r_i|$. For a pair of ${\sf t}_{\rm cs}$ and ${\sf t}'_{\rm cs}$, if  
\begin{equation} \label{eqn:bi_intf_cond}
\big( \frac{|{\sf t}_{\rm cs} - {\sf N}|}{{\sf P}} \big)^{\frac{-1}{\alpha}} \ge \big( \frac{|{\sf t}'_{\rm cs} - {\sf N}|}{{\sf P}} \big)^{\frac{-1}{\alpha}} +  2 d_{\max}, 
\end{equation}
then 
\begin{equation}
{\bf B}_{\max, \sf  d}^{\sf cpcs}[{\sf t}_{\rm cs}, \alpha, {\sf P}, {\sf N}] \le {\bf I}_{\max, \sf d}^{\sf cpcs}[{\sf t}'_{\rm cs}, \alpha, {\sf P}, {\sf N}] 
\end{equation}
\end{lemma}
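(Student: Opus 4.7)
The plan is constructive. Given any $X$ with ${\cal S} \in {\mathscr C}^{\sf cpcs}_{\sf P,N}[X,{\sf t}_{\rm cs}]$ and $i^\ast \in {\cal S}$, I will produce a companion configuration $X'$ (on the same link-index set and sequence) that witnesses both ${\bf I}_{t'_{i^\ast}}[{\cal S},\alpha] \ge {\bf B}_{i^\ast}[{\cal S},\alpha]$ and ${\cal S} \in {\mathscr C}^{\sf cpcs}_{\sf P,N}[X',{\sf t}'_{\rm cs}]$; taking the supremum over $(X,{\cal S},i^\ast)$ then concludes. Writing $d_{\rm cs} \triangleq (({\sf t}_{\rm cs}-{\sf N})/{\sf P})^{-1/\alpha}$ and $d'_{\rm cs} \triangleq (({\sf t}'_{\rm cs}-{\sf N})/{\sf P})^{-1/\alpha}$, the hypothesis reads $d_{\rm cs} \ge d'_{\rm cs} + 2 d_{\max}$. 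Peeling off any single earlier transmitter in the cumulative bound (\ref{eqn:agcs_cond}) immediately yields the pairwise separation $|t_j - t_k| \ge d_{\rm cs}$ for every pair in ${\cal S}$.

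I would construct $X'$ by radial shrinking about $t_{i^\ast}$. Let $u_j \triangleq |t_j - t_{i^\ast}|$ and $v_j \triangleq u_j - 2d_{\max}$; set $t'_{i^\ast} = t_{i^\ast}$ and, for $j \neq i^\ast$, place $t'_j$ on the segment from $t_{i^\ast}$ to $t_j$ at distance $v_j$ from $t_{i^\ast}$. Applying the triangle inequality to each of the four terms inside the definition of ${\sf dist}(i^\ast,j)$ (using $|t_{i^\ast}-r_{i^\ast}|, |t_j-r_j| \le d_{\max}$) delivers ${\sf dist}(i^\ast,j) \ge u_j - 2 d_{\max} = v_j$. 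Hence $|t'_j - t'_{i^\ast}| = v_j \le {\sf dist}(i^\ast,j)$ and ${\bf I}_{t'_{i^\ast}}[{\cal S},\alpha] = \sum_{j \neq i^\ast} v_j^{-\alpha} \ge {\bf B}_{i^\ast}[{\cal S},\alpha]$ is immediate.

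The crux is feasibility of $X'$ with the smaller threshold ${\sf t}'_{\rm cs}$. I aim to establish the pointwise contraction bound $|t'_j - t'_k| \ge \lambda\,|t_j - t_k|$ for every pair, where $\lambda \triangleq d'_{\rm cs}/d_{\rm cs}$. Because $u_j \ge d_{\rm cs}$, the shrink ratios satisfy $v_j/u_j = 1 - 2d_{\max}/u_j \ge \lambda$, so the excess $a_j \triangleq v_j - \lambda u_j$ is non-negative. Expanding $|t'_j - t'_k|^2 - \lambda^2 |t_j - t_k|^2$ in the angle $\theta$ subtended at $t_{i^\ast}$ splits the expression as $(a_j^2 + a_k^2 - 2 a_j a_k \cos\theta) + 2\lambda(u_j a_j + u_k a_k - \cos\theta(u_j a_k + u_k a_j))$; the first bracket is non-negative for every $\theta$, and for the second the identity $a_j - a_k = (1-\lambda)(u_j - u_k)$ makes the pairs $(u_j,u_k)$ and $(a_j,a_k)$ similarly ordered, so a rearrangement inequality handles the case $\cos\theta \ge 0$, while $\cos\theta < 0$ is trivial. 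Raising the contraction bound to the power $-\alpha$ and summing then yields $\sum_{i_j < i_k} {\sf P}|t'_{i_j} - t'_{i_k}|^{-\alpha} \le \lambda^{-\alpha}({\sf t}_{\rm cs} - {\sf N}) = {\sf t}'_{\rm cs} - {\sf N}$, where the final equality uses the defining relations ${\sf P} = ({\sf t}_{\rm cs} - {\sf N}) d_{\rm cs}^{\alpha} = ({\sf t}'_{\rm cs} - {\sf N}) (d'_{\rm cs})^{\alpha}$.

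The main obstacle is the pointwise contraction inequality: radial shrinking about $t_{i^\ast}$ can, in isolation, squeeze pairwise distances by a factor strictly worse than $\lambda$ when two points sit at a wide angle from $t_{i^\ast}$ (as simple $2$D examples confirm), so the argument genuinely leans on the pairwise lower bound $u_j,u_k \ge d_{\rm cs}$ to absorb the angular loss via the ordering identity above. Once the contraction bound is in hand, the four-term triangle estimate for ${\sf dist}$, the ${\bf I} \ge {\bf B}$ bookkeeping at $t_{i^\ast}$, and the final threshold conversion are all routine.
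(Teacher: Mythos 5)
Your proof is correct, and it follows the same overall strategy as the paper's --- peel one term off the cumulative condition to get the pairwise separation $|t_{i_j}-t_{i_k}| \ge \big(\tfrac{|{\sf t}_{\rm cs}-{\sf N}|}{\sf P}\big)^{-1/\alpha}$, absorb the $2d_{\max}$ slack via the triangle inequality applied to the four terms of ${\sf dist}(i,j)$, and convert to the smaller threshold ${\sf t}'_{\rm cs}$ --- but it is genuinely more complete at the final step. The paper introduces an intermediate class $\tilde{\mathscr C}^{\sf cpcs}_{\sf P, N}\big[{X},{\sf t}'_{\rm cs}\big]$ obtained by replacing $|t_{i_j}-t_{i_k}|$ with ${\sf dist}(i_j,i_k)$ in the carrier-sensing condition, and then simply asserts that the maximum of ${\bf B}_{i}[{S},\alpha]$ over this class is at most ${\bf I}_{\max, \sf d}^{\sf cpcs}[{\sf t}'_{\rm cs}, \alpha, {\sf P}, {\sf N}]$, without exhibiting a transmitter-only configuration that realizes it; this is exactly the realizability issue, since the distances ${\sf dist}(i^\ast,j)$ for varying $j$ need not a priori be simultaneously achievable as the transmitter--transmitter distances of some ${\sf t}'_{\rm cs}$-feasible sequence. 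Your radial-shrinking construction about $t_{i^\ast}$, together with the pointwise contraction $|t'_j-t'_k|\ge\lambda|t_j-t_k|$ (whose proof via the ordering identity $a_j-a_k=(1-\lambda)(u_j-u_k)$ and the lower bound $u_j\ge d_{\rm cs}$ checks out), supplies precisely the missing witness and makes that last inequality rigorous; it costs some extra Euclidean geometry but turns an assertion into a proof. Incidentally, the paper's earlier implication from the per-pair bound ${\sf dist}(i_j,i_k)\ge\big(\tfrac{|{\sf t}'_{\rm cs}-{\sf N}|}{\sf P}\big)^{-1/\alpha}$ to the \emph{summed} condition is also only valid via the multiplicative estimate you use (${\sf dist}(i_j,i_k)\ge\lambda\,|t_{i_j}-t_{i_k}|$, then sum and rescale), not from the stated per-term lower bound alone --- so your write-up also repairs that under-justified step.
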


\begin{proof}
Suppose ${\cal S} = (i_1, ..., i_{|{\cal S}|}) \in {\mathscr C}^{\sf cpcs}_{\sf P, N}[{X},{\sf t}_{\rm cs}]$. By Eqn.~(\ref{eqn:agcs_cond}), for any pair $i_j, i_k \in {\cal S}$ and $j < k$, we obtain %\vspace{-5pt}
\begin{equation*}
{\sf N} + {\sf P} |t_{i_j} - t_{i_k}|^{-\alpha} \le {\sf t}_{\rm cs}
\end{equation*} 
\begin{equation*}
\Rightarrow\ 
|t_{i_j} - t_{i_k}| \ge \big( \frac{|{\sf t}_{\rm cs} - {\sf N}|}{{\sf P}} \big)^{\frac{-1}{\alpha}} \ge \big( \frac{|{\sf t}'_{\rm cs} - {\sf N}|}{{\sf P}} \big)^{\frac{-1}{\alpha}} +  2 d_{\max} 
\end{equation*}

Since $|t_{i_j} - r_{i_j}| \le d_{\max}$ and $|t_{i_k} - r_{i_k}| \le d_{\max}$, by triangular inequality, we obtain the following inequalities:
\begin{eqnarray*}
|t_{i_j} - r_{i_k}| \ge |t_{i_j} - t_{i_k}| - |t_{i_k} - r_{i_k}| \ge \big( \frac{|{\sf t}'_{\rm cs} - {\sf N}|}{{\sf P}} \big)^{\frac{-1}{\alpha}} + d_{\max} \\
|r_{i_j} - t_{i_k}| \ge |t_{i_j} - t_{i_k}| - |r_{i_j} - t_{i_j}| \ge \big( \frac{|{\sf t}'_{\rm cs} - {\sf N}|}{{\sf P}} \big)^{\frac{-1}{\alpha}} + d_{\max} \\
|r_{i_j} - r_{i_k}| \ge |r_{i_j} - t_{i_k}| - |t_{i_k} - r_{i_k}| \ge \big( \frac{|{\sf t}'_{\rm cs} - {\sf N}|}{{\sf P}} \big)^{\frac{-1}{\alpha}} 
\end{eqnarray*}
Hence, %\vspace{-5pt}
\begin{equation*}
{\sf dist}(i,j) \ge \big( \frac{|{\sf t}'_{\rm cs} - {\sf N}|}{{\sf P}} \big)^{\frac{-1}{\alpha}}
\end{equation*}
\begin{equation} \label{eqn:agcs_newcond}
\Rightarrow\  {\sf N} + \underset{i_j \in \{i_1, ..., i_{k-1}\}}{\sum} {\sf P} \cdot
{\sf dist}(i_j,i_k)^{-\alpha} \le {\sf t}'_{\rm cs}
\end{equation} 

We define a new set of feasible states $\tilde{\mathscr C}^{\sf cpcs}_{\sf P, N}\big[{X},{\sf t}'_{\rm cs}\big]$, such that
${\cal S} \in \tilde{\mathscr C}^{\sf cpcs}_{\sf P, N}\big[{X},{\sf t}'_{\rm cs}\big]$, if and only if Eqn.~(\ref{eqn:agcs_newcond}) holds for all $i_k \in {\cal S}$. It follows that
\begin{equation}
{\cal S} \in {\mathscr C}^{\sf cpcs}_{\sf P, N}\big[{X},{\sf t}_{\rm cs}\big]
\ \Rightarrow\  {\cal S} \in \tilde{\mathscr C}^{\sf cpcs}_{\sf P, N}\big[{X},{\sf t}'_{\rm cs}\big]
\end{equation}
Namely, ${\mathscr C}^{\sf cpcs}_{\sf P, N}\big[{X},{\sf t}_{\rm cs}\big] \subseteq \tilde{\mathscr C}^{\sf cpcs}_{\sf P, N}\big[{X},{\sf t}'_{\rm cs}\big]$. Hence,  
\begin{eqnarray*}
{\bf B}_{\max, \sf  d}^{\sf cpcs}[{\sf t}_{\rm cs}, \alpha, {\sf P}, {\sf N}] & = & \max_{X, {\cal S} \in {\mathscr C}^{\sf cpcs}_{\sf P, N}[{X},{\sf t}'_{\rm cs}], i \in {\cal S}} {\bf B}_{i}[{S}, \alpha]  \\
& \le & \max_{X, {\cal S} \in \tilde{\mathscr C}^{\sf cpcs}_{\sf P, N}[{X},{\sf t}'_{\rm cs}], i \in {\cal S}} {\bf B}_{i}[{S}, \alpha] 
\end{eqnarray*}
We complete the proof by: %\vspace{-5pt}
\begin{equation}
\max_{X, {\cal S} \in \tilde{\mathscr C}^{\sf cpcs}_{\sf P, N}[{X},{\sf t}'_{\rm cs}], i \in {\cal S}} {\bf B}_{i}[{S}, \alpha] \le {\bf I}_{\max, \sf d}^{\sf cpcs}[{\sf t}'_{\rm cs}, \alpha, {\sf P}, {\sf N}] 
\end{equation}
\end{proof}

\medskip

\begin{lemma} \label{lem:ag_rescale}
${\bf I}_{\max, \sf d}^{\sf cpcs}[{\sf t}_{\rm cs}, \alpha]$ has the following equivalence relations: 
\begin{enumerate}

\item {\em Rescaling carrier sensing threshold to ${\sf N} + 1$}: 
\begin{equation}
 {\bf I}_{\max, \sf d}^{\sf cpcs}[{\sf t}_{\rm cs}, \alpha, {\sf P}, {\sf N}] =  |{\sf t}_{\rm cs} - {\sf N}| \cdot {\bf I}_{\max, \sf d}^{\sf cpcs}[{\sf N} + 1, \alpha, {\sf P}, {\sf N}] \label{eqn:ag_rescale1}
\end{equation} 

\item {\em Rescaling transmission power to $1$}: 
 \begin{equation}
 {\bf I}_{\max, \sf d}^{\sf cpcs}[{\sf t}_{\rm cs}, \alpha, {\sf P}, {\sf N}] =  \frac{1}{\sf P} {\bf I}_{\max, \sf d}^{\sf cpcs}[{\sf t}_{\rm cs}, \alpha, 1, {\sf N}] \label{eqn:ag_rescale2}
\end{equation}

\end{enumerate}
\end{lemma}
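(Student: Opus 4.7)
The plan is to derive both identities from a single homogeneity property of a one-parameter scalar function, making Parts~1 and 2 twin consequences of essentially the same calculation.

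First I would observe that the physical parameters ${\sf P}$, ${\sf N}$ and ${\sf t}_{\rm cs}$ enter both the feasibility condition~(\ref{eqn:agcs_cond}) and the objective ${\bf I}_{t_i}[S, \alpha] = \sum_{j} |t_j - t_i|^{-\alpha}$ only through the single combination $\gamma \triangleq ({\sf t}_{\rm cs} - {\sf N})/{\sf P}$: dividing the feasibility inequality by ${\sf P}$ gives $\sum_{i_j \in \{i_1,\dots,i_{k-1}\}} |t_{i_j} - t_{i_k}|^{-\alpha} \le \gamma$ for every prefix, with neither ${\sf P}$ nor ${\sf N}$ appearing elsewhere. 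Defining the scalar function $F(\gamma)$ as the supremum of ${\bf I}_{t_i}[S,\alpha]$ over all link sets $X$ and all sequences ${\cal S}$ satisfying this reduced constraint at every prefix (for ${\sf t}_{\rm cs} \ge {\sf N}$; otherwise the feasible set is empty), I would thus write
\begin{equation*}
{\bf I}_{\max, \sf d}^{\sf cpcs}[{\sf t}_{\rm cs}, \alpha, {\sf P}, {\sf N}] \;=\; F\!\left( \frac{{\sf t}_{\rm cs} - {\sf N}}{{\sf P}} \right).
\end{equation*}

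The crux of the argument is then the homogeneity identity $F(\gamma) = \gamma \cdot F(1)$, which I would prove by the bijective spatial rescaling $t_i \mapsto \gamma^{1/\alpha} \, t_i$. Under this map every pairwise distance is multiplied by $\gamma^{1/\alpha}$, so every summand $|t_j - t_i|^{-\alpha}$ is multiplied by $1/\gamma$. Consequently the per-prefix constraint $\le \gamma$ transforms into the constraint $\le 1$, and the objective is simultaneously scaled by $1/\gamma$. Since the rescaling is invertible, the two inequalities $F(\gamma) \le \gamma \, F(1)$ and $F(\gamma) \ge \gamma \, F(1)$ both follow immediately.

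Finally, I would read off the two claimed identities by substitution. For Part~1, the LHS equals $F(({\sf t}_{\rm cs} - {\sf N})/{\sf P}) = \frac{{\sf t}_{\rm cs} - {\sf N}}{{\sf P}} F(1)$, while the RHS equals $|{\sf t}_{\rm cs} - {\sf N}| \cdot F(1/{\sf P}) = |{\sf t}_{\rm cs} - {\sf N}| \cdot \frac{1}{{\sf P}} F(1)$, and the two coincide under the assumption ${\sf t}_{\rm cs} \ge {\sf N}$ that makes the absolute value harmless. For Part~2, the LHS is unchanged and the RHS equals $\frac{1}{{\sf P}} \cdot F({\sf t}_{\rm cs} - {\sf N}) = \frac{{\sf t}_{\rm cs} - {\sf N}}{{\sf P}} F(1)$, again matching.

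The main subtlety, rather than a genuine obstacle, is bookkeeping: because ${\mathscr C}^{\sf cpcs}_{\sf P, N}$ is defined over ordered sequences with a constraint at every prefix rather than over unordered subsets, the rescaling must preserve causal feasibility at every stage. Since the map is a single uniform dilation applied to every transmitter, every prefix constraint is multiplied by the same factor $1/\gamma$ in lockstep, so the bijection carries ${\mathscr C}^{\sf cpcs}_{\sf P, N}[X, {\sf t}_{\rm cs}]$ onto the corresponding feasible set at threshold $1$ verbatim, requiring only a direct verification rather than any new idea.
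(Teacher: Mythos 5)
Your proof is correct and rests on the same mechanism as the paper's: a uniform spatial dilation by a power of the scale factor, which multiplies every prefix constraint and every summand of the objective in lockstep and is invertible, yielding equality. The only difference is packaging — you fold both identities into one homogeneity statement $F(\gamma)=\gamma F(1)$ for $\gamma = ({\sf t}_{\rm cs}-{\sf N})/{\sf P}$, whereas the paper runs the dilation argument twice (once with factor $({\sf t}_{\rm cs}-{\sf N})^{-1/\alpha}$, once with ${\sf P}^{1/\alpha}$); your version is a bit cleaner and correctly notes the implicit assumption ${\sf t}_{\rm cs}\ge{\sf N}$ that makes the paper's absolute values harmless.
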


\begin{proof}
The equivalence relations Eqns.~(\ref{eqn:ag_rescale1})-(\ref{eqn:ag_rescale2}) are shown by rescaling the interference level appropriately to preserve the same feasible state, considering different values of ${\sf t}_{\rm cs}$ and ${\sf P}$.

To prove Eqn.~(\ref{eqn:ag_rescale1}), suppose that $\tilde{X}$ and $\tilde{\cal S}$ give the maximal interference level in ${\bf I}_{\max, \sf d}^{\sf cpcs}[{\sf N} + 1, \alpha, {\sf P}, {\sf N}]$. Since $\tilde{X}$ lies in $\Re^{\sf d}$, we can rescale the distances between any pair of transmitters in $\tilde{X}$ by a factor of $({\sf t}_{\rm cs} - {\sf N})^{-1/\alpha}$. We denote such a set of transmitters after rescaling as $\tilde{X'}$. Namely, %\vspace{-5pt}
\begin{equation}
|t'_j - t'_i| = |{\sf t}_{\rm cs} - {\sf N}|^{-1/\alpha} |t_j - t_i| 
\end{equation}
where $t'_j, t'_i \in X'$, $t_j, t_i \in X$.
It is evident that %\vspace{-5pt}
\begin{equation}
\begin{array}{@{}r@{\ }l@{}} 
& {\sf N} + \underset{i_j \in \{i_1, ..., i_{k-1}\}}{\sum} {\sf P}
|t_{i_j} - t_{i_k}|^{-\alpha} \le {\sf N} + 1 \\
\Leftrightarrow &
{\sf N} + \underset{i'_j \in \{i'_1, ..., i'_{k-1}\}}{\sum} {\sf P}
|t_{i'_j} - t_{i'_k}|^{-\alpha} \le {\sf t}_{\rm cs}
\end{array}
\end{equation}
where $(i_1, ..., i_{k}) = \tilde{\cal S}$ and $(i'_1, ..., i'_{k}) = \tilde{\cal S'}$, and $\tilde{\cal S'}$ is the induced sequence of $\tilde{\cal S}$ in the rescaled $\tilde{X'}$.

Hence, the induced $\tilde{\cal S'}$ must also give the maximal interference level for ${\bf I}_{\max, \sf d}^{\sf cpcs}[{\sf t}_{\rm cs}, \alpha, {\sf P, N}]$. Therefore, %\vspace{-5pt}
\begin{equation} 
\begin{array}{@{}r@{\ }l@{}}
& {\bf I}_{\max, \sf d}^{\sf cpcs}[{\sf t}_{\rm cs}, \alpha, {\sf P, N}] = \underset{j \in \tilde{S'}}{\sum} |t'_j - t'_i|^{-\alpha} \\
= &|{\sf t}_{\rm cs} - {\sf N}| \cdot \underset{j \in \tilde{\cal S}}{\sum} |t_j - t_i|^{-\alpha} \\
=  & |{\sf t}_{\rm cs} - {\sf N}| \cdot {\bf I}_{\max, \sf d}^{\sf cpcs}[{\sf N} + 1, \alpha, {\sf P, N}]
 \end{array}
\end{equation}

To prove Eqn.~(\ref{eqn:ag_rescale2}), it follows a similar approach. However, we multiply the distances between all transmitters in $\tilde{X}'$ by a factor of ${\sf P}^{1/\alpha}$.
\end{proof}

\medskip

\usecounter{theorem}
\setcounter{theorem}{0}  
\begin{theorem} 
Given a set of links $X$, which lies in Euclidean space $\Re^{\sf d}$, let $d_{\max} \triangleq \max_{i \in X} |t_i - r_i|$.
If we set 
\begin{equation}
{\sf t}_{\rm cs} \le  {\sf P} \Big( 2 d_{\max} +   
\big( \frac{1}{{\bf I}_{\max, \sf d}^{\sf cpcs}[\alpha]}
(\frac{d_{\max}^{-\alpha}}{\beta} - \frac{\sf N}{\sf P}) \big)^{\frac{-1}{\alpha}}
\Big)^{-\alpha}  + {\sf N} 
\end{equation}
then it can ensure interference-safe transmissions in CPCS, namely,
\begin{equation}
{\cal S} \in {\mathscr C}^{\sf cpcs}_{\sf P, N}\big[{X},{\sf t}_{\rm cs}\big] \ \ \Rightarrow \ \ {\cal S} \in {\mathscr B}^{}_{\sf P, N}\big[{X},\beta\big] 
\end{equation}
\end{theorem}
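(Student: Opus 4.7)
The plan is to reduce the theorem to the two auxiliary lemmas (Lemma~\ref{lem:bi_intf} on relating the bi-directional interference level to the uni-directional one, and Lemma~\ref{lem:ag_rescale} on rescaling) via a short chain of algebraic manipulations. First I would rewrite the bi-directional feasibility condition \eqref{eqn:B_ag} as an explicit upper bound on the bi-directional interference level: $\mathcal{S}\in\mathscr{B}_{\sf P,N}^{}[X,\beta]$ holds iff, for every $i\in\mathcal{S}$,
\begin{equation*}
{\bf B}_i[\mathcal{S},\alpha]\ \le\ \frac{|t_i-r_i|^{-\alpha}}{\beta}-\frac{\sf N}{\sf P}.
\end{equation*}
Since $|t_i-r_i|\le d_{\max}$ implies $|t_i-r_i|^{-\alpha}\ge d_{\max}^{-\alpha}$, it suffices to verify the stronger uniform bound ${\bf B}_{\max,\sf d}^{\sf cpcs}[{\sf t}_{\rm cs},\alpha,{\sf P},{\sf N}]\le \frac{d_{\max}^{-\alpha}}{\beta}-\frac{\sf N}{\sf P}$ for every ${\cal S}\in{\mathscr C}^{\sf cpcs}_{\sf P,N}[X,{\sf t}_{\rm cs}]$.

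Next I would invoke Lemma~\ref{lem:bi_intf}: if the stated threshold ${\sf t}_{\rm cs}$ and some auxiliary ${\sf t}'_{\rm cs}$ satisfy \eqref{eqn:bi_intf_cond}, namely $\bigl(\tfrac{|{\sf t}_{\rm cs}-{\sf N}|}{\sf P}\bigr)^{-1/\alpha}\ge \bigl(\tfrac{|{\sf t}'_{\rm cs}-{\sf N}|}{\sf P}\bigr)^{-1/\alpha}+2d_{\max}$, then the bi-directional maximum under ${\sf t}_{\rm cs}$ is dominated by the (easier) uni-directional maximum under the looser ${\sf t}'_{\rm cs}$. So the task reduces to choosing ${\sf t}'_{\rm cs}$ so that ${\bf I}_{\max,\sf d}^{\sf cpcs}[{\sf t}'_{\rm cs},\alpha,{\sf P},{\sf N}]\le\frac{d_{\max}^{-\alpha}}{\beta}-\frac{\sf N}{\sf P}$, and then solving \eqref{eqn:bi_intf_cond} for ${\sf t}_{\rm cs}$.

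I would then use Lemma~\ref{lem:ag_rescale} to eliminate the dependence of ${\bf I}_{\max,\sf d}^{\sf cpcs}[\cdot]$ on the physical parameters. Composing the two rescaling identities \eqref{eqn:ag_rescale1}--\eqref{eqn:ag_rescale2}, together with the observation that when ${\sf t}_{\rm cs}={\sf N}+1$ and ${\sf P}=1$ the background noise ${\sf N}$ drops out of the CPCS constraint (so ${\bf I}_{\max,\sf d}^{\sf cpcs}[{\sf N}+1,\alpha,1,{\sf N}]={\bf I}_{\max,\sf d}^{\sf cpcs}[1,\alpha,1,0]={\bf I}_{\max,\sf d}^{\sf cpcs}[\alpha]$), I obtain the clean identity
\begin{equation*}
{\bf I}_{\max,\sf d}^{\sf cpcs}[{\sf t}'_{\rm cs},\alpha,{\sf P},{\sf N}]\ =\ \frac{|{\sf t}'_{\rm cs}-{\sf N}|}{\sf P}\cdot{\bf I}_{\max,\sf d}^{\sf cpcs}[\alpha].
\end{equation*}
Setting this quantity equal to $\frac{d_{\max}^{-\alpha}}{\beta}-\frac{\sf N}{\sf P}$ determines $\bigl(\tfrac{|{\sf t}'_{\rm cs}-{\sf N}|}{\sf P}\bigr)^{-1/\alpha}=\bigl(\tfrac{1}{{\bf I}_{\max,\sf d}^{\sf cpcs}[\alpha]}(\tfrac{d_{\max}^{-\alpha}}{\beta}-\tfrac{\sf N}{\sf P})\bigr)^{-1/\alpha}$; plugging this into \eqref{eqn:bi_intf_cond} and raising to the power $-\alpha$ yields exactly the expression for ${\sf t}_{\rm cs}$ in the theorem statement.

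The main obstacle I anticipate is not any single calculation but keeping the three different notions of interference level (${\bf I}$, ${\bf B}$, and the bi-directional SINR) aligned with the correct rescalings of $({\sf P},{\sf N},{\sf t}_{\rm cs})$. In particular, the cancellation of ${\sf N}$ in the normalization step is subtle: it relies on the fact that the CPCS feasibility inequality \eqref{eqn:agcs_cond} is of the form \emph{noise plus interference $\le$ threshold}, so a simultaneous shift of threshold by ${\sf N}$ leaves the feasible region invariant. Once that observation is in place, the rest is bookkeeping on Lemmas~\ref{lem:bi_intf} and \ref{lem:ag_rescale}, which are proved separately.
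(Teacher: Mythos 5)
Your proposal is correct and follows essentially the same route as the paper's own proof: it reduces Theorem~\ref{thm:ag_imax} to Lemma~\ref{lem:bi_intf} and Lemma~\ref{lem:ag_rescale}, normalizes ${\bf I}_{\max,\sf d}^{\sf cpcs}[{\sf t}'_{\rm cs},\alpha,{\sf P},{\sf N}]$ to $\frac{|{\sf t}'_{\rm cs}-{\sf N}|}{{\sf P}}\,{\bf I}_{\max,\sf d}^{\sf cpcs}[\alpha]$, and then solves Eqn.~(\ref{eqn:bi_intf_cond}) for ${\sf t}_{\rm cs}$. The only (harmless) difference is presentational: you isolate ${\bf B}_i[{\cal S},\alpha]$ as an explicit upper bound before invoking the lemmas, whereas the paper carries the same content through a chain of SINR lower bounds.
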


\begin{proof} 
Suppose that ${\cal S} \in {\mathscr C}^{\sf cpcs}_{\sf P, N}\big[{X},{\sf t}_{\rm cs}\big]$, and ${\sf t}'_{\rm cs}$ satisfies $\big( \frac{|{\sf t}_{\rm cs} - {\sf N}|}{{\sf P}} \big)^{\frac{-1}{\alpha}} \ge \big( \frac{|{\sf t}'_{\rm cs} - {\sf N}|}{{\sf P}} \big)^{\frac{-1}{\alpha}} +  2 d_{\max}$.

Then, by Lemmas~\ref{lem:bi_intf} and \ref{lem:ag_rescale}, we obtain: 
\begin{eqnarray}
& & \displaystyle \frac{{\sf P} |t_i - r_i|^{-\alpha}}{{\sf N} + \underset{j \in {\cal S} \backslash \{ i \}}{\sum} {\sf P} \cdot {\sf dist}(i,j)^{-\alpha}}  \\ 
& \ge & \displaystyle
  \frac{{\sf P} d_{\max}^{-\alpha}}{{\sf N} + {\sf P}\cdot{\bf B}_{\max, \sf  d}^{\sf cpcs}[{\sf t}_{\rm cs}, \alpha, {\sf P}, {\sf N}]} \\
& \ge & \displaystyle  \frac{{\sf P}d_{\max}^{-\alpha}}{{\sf N} + {\sf P}\cdot{\bf I}_{\max, \sf d}^{\sf cpcs}[{\sf t}'_{\rm cs}, \alpha, {\sf P}, {\sf N}]} \\
  & =  & \displaystyle \frac{{\sf P}\cdot d_{\max}^{-\alpha}}{{\sf N} + (|{\sf t}'_{\rm cs} - {\sf N}|) \cdot {\bf I}_{\max, \sf d}^{\sf cpcs}[{\sf N} + 1, \alpha, {\sf P}, {\sf N}] } 
\end{eqnarray}
We note that ${\mathscr C}^{\sf cpcs}_{\sf P, N}[{X},{\sf N} + 1] = {\mathscr C}^{\sf cpcs}_{{\sf P},{\sf N}=0}[{X},1]$, and hence,
\begin{equation} 
{\bf I}_{\max, \sf d}^{\sf cpcs}[{\sf N} + 1, \alpha, {\sf P}, {\sf N}] = {\bf I}_{\max, \sf d}^{\sf cpcs}[1, \alpha, {\sf P}, 0]
\end{equation}
Thus,
\begin{equation} 
 \frac{{\sf P} |t_i - r_i|^{-\alpha}}{{\sf N} + \underset{j \in {\cal S} \backslash \{ i \}}{\sum} {\sf P} \cdot {\sf dist}(i,j)^{-\alpha}} 
\ge \frac{{\sf P}\cdot d_{\max}^{-\alpha}}{{\sf N} + (|{\sf t}'_{\rm cs} - {\sf N}|) \cdot {\bf I}_{\max, \sf d}^{\sf cpcs}[\alpha] }
\end{equation}
Therefore, if $\frac{{\sf P}\cdot d_{\max}^{-\alpha}}{{\sf N} + (|{\sf t}'_{\rm cs} - {\sf N}|) \cdot {\bf I}_{\max, \sf d}^{\sf cpcs}[\alpha] } \ge \beta$, then
\begin{equation} 
\frac{{\sf P} |t_i - r_i|^{-\alpha}}{{\sf N} + \underset{j \in {\cal S} \backslash \{ i \}}{\sum} {\sf P} \cdot {\sf dist}(i,j)^{-\alpha}} 
\ge \frac{{\sf P}\cdot d_{\max}^{-\alpha}}{{\sf N} + (|{\sf t}'_{\rm cs} - {\sf N}|) \cdot {\bf I}_{\max, \sf d}^{\sf cpcs}[\alpha] } \ge \beta
\end{equation}
Namely, ${\cal S} \in {\mathscr B}^{}_{\sf P, N}\big[{X},\beta\big]$.

Note that $\frac{{\sf P}\cdot d_{\max}^{-\alpha}}{{\sf N} + (|{\sf t}'_{\rm cs} - {\sf N}|) \cdot {\bf I}_{\max, \sf d}^{\sf cpcs}[\alpha] } \ge \beta$ is equivalent to
\begin{equation} 
2 d_{\max} +   
\big( \frac{1}{{\bf I}_{\max, \sf d}^{\sf cpcs}[\alpha]}
(\frac{d_{\max}^{-\alpha}}{\beta} - \frac{\sf N}{\sf P}) \big)^{\frac{-1}{\alpha}} \le \big( \frac{|{\sf t}'_{\rm cs} - {\sf N}|}{{\sf P}} \big)^{\frac{-1}{\alpha}} +  2 d_{\max}
\end{equation}

Since $\big( \frac{|{\sf t}_{\rm cs} - {\sf N}|}{{\sf P}} \big)^{\frac{-1}{\alpha}} \ge \big( \frac{|{\sf t}'_{\rm cs} - {\sf N}|}{{\sf P}} \big)^{\frac{-1}{\alpha}} +  2 d_{\max}$, we obtain the corresponding setting of ${\sf t}_{\rm cs}$ for ensuring interference-safe transmissions (i.e., ${\cal S} \in {\mathscr B}^{}_{\sf P, N}\big[{X},\beta\big]$) by:
\begin{equation}
{\sf t}_{\rm cs} \le  {\sf P} \Big( 2 d_{\max} +   
\big( \frac{1}{{\bf I}_{\max, \sf d}^{\sf cpcs}[\alpha]}
(\frac{d_{\max}^{-\alpha}}{\beta} - \frac{\sf N}{\sf P}) \big)^{\frac{-1}{\alpha}}
\Big)^{-\alpha}  + {\sf N} 
\end{equation}
\end{proof}

\subsection{1-D Case}

We can evaluate ${\bf I}_{\max, 1}^{\sf cpcs}[\alpha]$ by considering the closest packing on the real line $\Re$.
We let $d_i \triangleq |t_{2i-1} - t_{2i-3}|$ and $c_i \triangleq |t_{2i} - t_{2i-2}|$.

The closest packing can be obtained as follows. First, place $t_1$ as closely as possible to $t_0$, such that $|t_1-t_0|^{-\alpha} = 1$. Then, place $t_2$ as closely as possible to $t_0$, such that $|t_2-t_0|^{-\alpha} + |t_2-t_1|^{-\alpha} = 1$. We iterate this process for the $k$-th transmitter $t_k$ as in Fig.~\ref{fig:realine}.

\begin{figure}[htb!]  \vspace{-5pt} 
    \centering
    \includegraphics[scale=0.7]{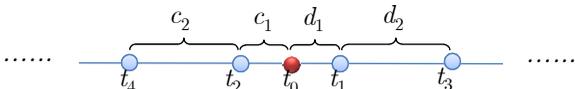}
  \caption{The closest packing in 1-D Case to attain ${\bf I}_{\max, 1}^{\sf cpcs}[\alpha]$.} \label{fig:realine}
\end{figure}

Hence, we obtain
\begin{equation} \hspace{-5pt}
\begin{array}{@{}r@{\ }l@{}}
1 = & (d_1)^{-\alpha} \  \Rightarrow \ d_1 = 1\\
1 = &  (c_1)^{-\alpha} + (c_1 + d_1)^{-\alpha} \\
1 = &  (d_2)^{-\alpha} + (d_1 + d_2)^{-\alpha} + (c_1 + d_1 + d_2)^{-\alpha} \\
1 = & (c_2)^{-\alpha} + (c_2 + c_1)^{-\alpha} + (c_2 + c_1 + d_1)^{-\alpha} \\
& + (c_2 + c_1 + d_1 + d_2)^{-\alpha} \\
& \qquad \qquad \vdots   \\
\end{array} \notag
\end{equation}
\begin{equation} \hspace{-5pt}
\begin{array}{@{}r@{\ }l@{}}
1 = & (d_k)^{-\alpha} + (d_{k-1} + d_k )^{-\alpha} + ... + (d_1 + ... + d_k)^{-\alpha} \\
& + (c_1 + d_1 + ... + d_k)^{-\alpha} + ... \\
 & + (c_{k-1} + ... + c_1 + d_1 + ... + d_k)^{-\alpha} \\
1 = & (c_k)^{-\alpha} + (c_k + c_{k-1} )^{-\alpha} + ... + (c_k + ... + c_1)^{-\alpha} \\
& + (c_k + ... + c_1 + d_1)^{-\alpha}
+ ... \\
& + (c_k + ... + c_1 + d_1 + ... + d_k)^{-\alpha} \\
\end{array}
\end{equation}

However, it is difficult to obtain $\{ c_i, d_i\}_{i = 1, 2, ...}$ for general $\alpha$. Next, we obtain a upper bound for ${\bf I}_{\max, 1}^{\sf cpcs}[\alpha]$.
First, we note that
\begin{equation*} 
d_1 < c_1 < d_2 < c_2 < ... < d_{k-1} < c_{k-1} < d_k < c_k < ...
\end{equation*}
Hence, we obtain
\begin{equation}
\begin{array}{@{}r@{\ }l@{}}
c_1 + d_1 & < c_1 + c_1 \\
(c_1 + d_1)^{-\alpha} & > (c_1 + c_1)^{-\alpha} \\
1 = (c_1)^{-\alpha} + (c_1 + d_1)^{-\alpha} & > (c_1)^{-\alpha} (1 + 2^{-\alpha}) \\
\Rightarrow\  c_1 & > \frac{1}{ (1 + 2^{-\alpha})^{-1/\alpha}}
\end{array}
\end{equation}
Similarly, we obtain
\begin{equation*} 
\begin{array}{@{}r@{\ }l@{}}
1 = & (d_k)^{-\alpha} + (d_{k-1} + d_k )^{-\alpha} + ... + (d_1 + ... + d_k)^{-\alpha} \\
& + (c_1 + d_1 + ... + d_k)^{-\alpha}  + ...  \\
& + (c_{k-1} + ... + c_1 + d_1 + ... + d_k)^{-\alpha} \\
   > & (d_k)^{-\alpha} ( 1 + 2^{-\alpha} + ... + (2k-1)^{-\alpha})
\end{array}
\end{equation*}
\begin{equation} \label{eqn:dk}
\Rightarrow \quad d_k  > \Big(\sum_{i=1}^{2k-1} i^{-\alpha} \Big)^{1/\alpha}
\end{equation}
\begin{equation*}
\begin{array}{@{}r@{\ }l@{}}
1 = & (c_k)^{-\alpha} + (c_k + c_{k-1} )^{-\alpha} + ... + (c_k + ... + c_1)^{-\alpha} \\
& + (c_k + ... + c_1 + d_1)^{-\alpha}  + ... \\
& + (c_k + ... + c_1 + d_1 + ... + d_k)^{-\alpha} \\
  > & (c_k)^{-\alpha} ( 1 + 2^{-\alpha} + ... + (2k)^{-\alpha}) \\
\end{array}
\end{equation*}
\begin{equation}
\Rightarrow \quad  c_k  > \Big(\sum_{i=1}^{2k} i^{-\alpha} \Big)^{1/\alpha}
\end{equation}

Therefore,
\begin{equation*} 
\begin{array}{@{}r@{\ }l@{}}
{\bf I}_{\max, 1}^{\sf cpcs}[\alpha]= & \displaystyle \sum_{n = 1}^{\infty} (c_1 + ... + c_n)^{-\alpha} +  \sum_{n = 1}^{\infty} (d_1 + ... + d_n)^{-\alpha} \\
< &  \displaystyle \bar{\bf I}_1[\alpha] \triangleq \sum_{n = 1}^{\infty} \Big( \sum_{k = 1}^{n} \Big(\sum_{i=1}^{2k} i^{-\alpha} \Big)^{1/\alpha} \Big)^{-\alpha} \\
& \qquad \quad  \displaystyle + \sum_{n = 1}^{\infty} \Big( \sum_{k = 1}^{n} \Big(\sum_{i=1}^{2k-1} i^{-\alpha} \Big)^{1/\alpha} \Big)^{-\alpha}   
\end{array}
\end{equation*}

Numerically, we evaluate $\bar{\bf I}_1[\alpha]$ in Fig.~\ref{fig:1d_I} by summing only the first $n$ terms of $\bar{\bf I}_1[\alpha]$. We observe that $\bar{\bf I}_1[\alpha]$ converges quickly (see Fig.~\ref{fig:1d_I}). 
We tabulate the values of $\bar{\bf I}_1[\alpha]$ via numerical study in Table~\ref{tab:I_1}. 
Note that ${\bf I}_{\max, 1}^{\sf cpcs}[2] \approx 2.59$ by numerical study. Hence, the upper bound appears to be tight.

\begin{figure}[htb!]  %\vspace{-10pt} 
    \centering
    \includegraphics[scale=0.75]{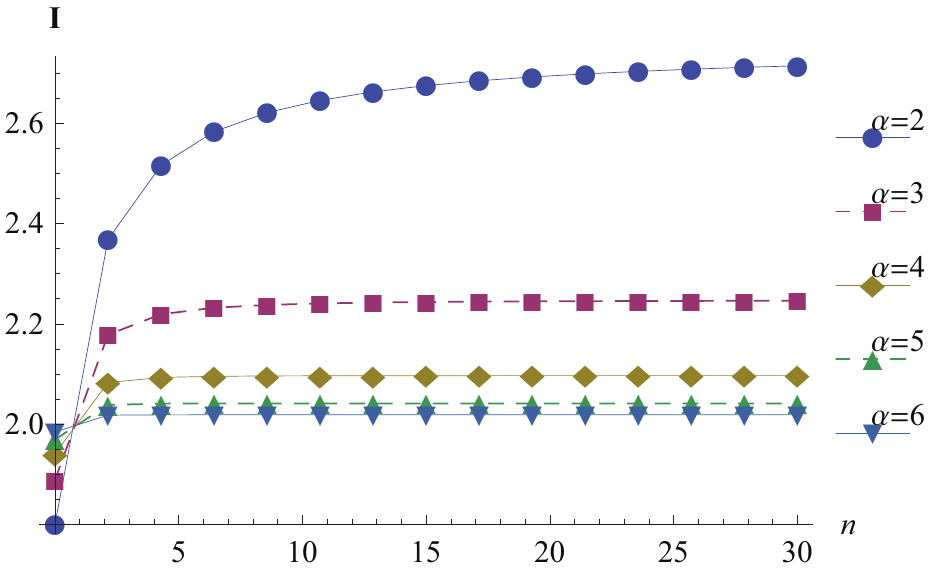} 
   \caption{Numerical values of $\bar{\bf I}_1[\alpha]$ of the first $n$ terms in the summation.} \label{fig:1d_I}  
\end{figure}

\begin{table}[ht] \centering 
\caption{Numerical values of $\bar{\bf I}_1[\alpha]$.}   \label{tab:I_1}
\begin{tabular}{@{}c@{\ }|@{\ }c@{\ }|@{\ }c@{\ }|@{\ }c@{\ }|@{\ }c@{\ }|@{\ }c@{}}  \hline \hline
& \multirow{2}{*}{$\alpha=2$} & \multirow{2}{*}{$\alpha=3$} & \multirow{2}{*}{$\alpha=4$} & \multirow{2}{*}{$\alpha=5$} & \multirow{2}{*}{$\alpha=6$}\\ 
& & & & & \\
\hline
\multirow{2}{*}{$\bar{\bf I}_1[\alpha]$} & \multirow{2}{*}{2.74438} & \multirow{2}{*}{2.24708} & \multirow{2}{*}{2.09705} & \multirow{2}{*}{2.04166} & \multirow{2}{*}{2.01887}\\
& & & & & \\
\hline \hline
\end{tabular} 
\end{table}

\subsection{2-D Case}

Obtaining the maximal interference level ${\bf I}_{\max, 2}^{\sf cpcs}[\alpha]$ for Theorem~\ref{thm:ag_imax} in 2-D case is more complicated, as there are many more possible locations of nodes in $\Re^2$. %By numerical solving $({\bf MaxI})$ for the case of $\alpha = 3$ with a large $T$, we obtain ${\bf I}_{\max, 2}^{\sf cpcs}[3] \approx 4.2$.

We next give an upper bound for ${\bf I}_{\max, 2}^{\sf cpcs}[\alpha]$. Recall the sequence of separation distances $(d_1, d_2, ...)$ from the definition in Fig.~\ref{fig:realine} for the 1-D case. We set the spacing distance in the hexagonal grid to be $d_1=1$. 

Nodes are placed as hexagonal rings around $t_0$ (see Fig.~\ref{fig:hex2}). We denote the set of nodes in hexagonal grid for the $i$-th ring by ${\cal H}_i = \{t_i^1, ..., t_i^{|{\cal H}_i|}\}$. Particularly, $t_i^1$ are placed on the positive horizontal real line in $\Re^2$.
We set the separation distance between the rings according to sequence $(d_1, d_2, ...)$. Namely, the location of $t_i^1$ in $\Re^2$ is $\big( \lfloor \sum_{j=1}^{i} d_i \rfloor, 0 \big)$.

We can upper bound ${\bf I}_{t_0}[\bigcup_{j=1}^{\infty} {\cal H}_j]$ from Fig.~\ref{fig:hex2}. For each $i$-th ring, $|{\cal H}_i| = 6 \big( \lfloor\sum_{j=1}^{i} d_i\rfloor\big)$. Therefore,   
\begin{equation*} 
\begin{array}{@{}r@{\ }l@{}}
{\bf I}_{\max, 2}^{\sf cpcs}[\alpha]< & {\bf I}_{t_0}[\bigcup_{j=1}^{\infty} {\cal H}_j] <  \displaystyle \sum_{n = 1}^{\infty} 6 (d_1 + ... + d_n)^{-\alpha+1} \\
< &  \displaystyle \bar{\bf I}_2[\alpha] \triangleq  \displaystyle 6 \sum_{n = 1}^{\infty} \Big( \sum_{k = 1}^{n} \Big(\sum_{i=1}^{2k-1} i^{-\alpha} \Big)^{1/\alpha} \Big)^{-\alpha+1} 
\end{array}
\end{equation*}

\begin{figure}[htb!]
      \centering
    \includegraphics[scale=0.7]{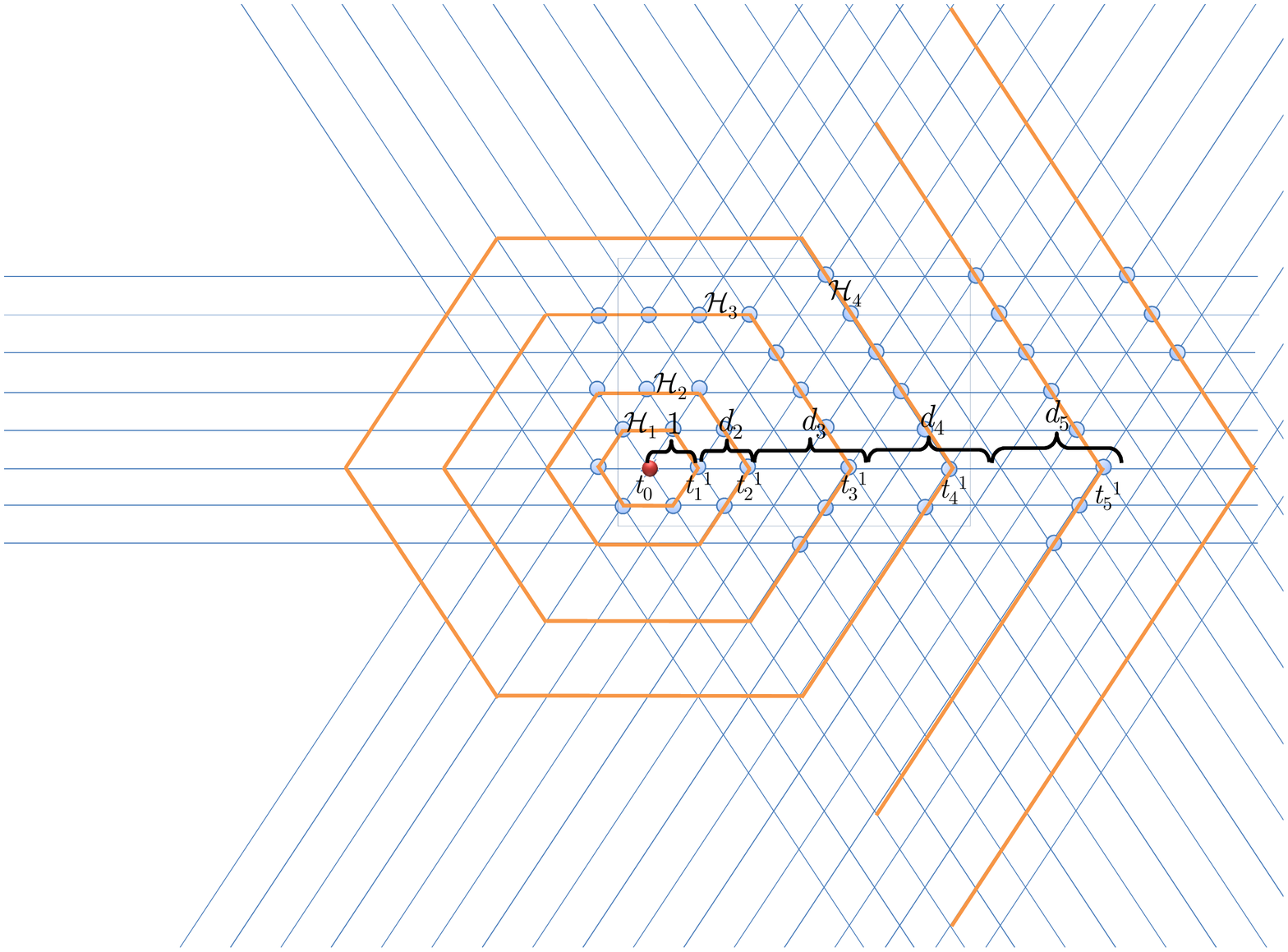} 
  \caption{Nodes are placed as hexagonal rings, separated by $d_i$.} \label{fig:hex2} 
\end{figure}

Numerically, we evaluate $\bar{\bf I}_2[\alpha]$ in Fig.~\ref{fig:2d_I} by summing only the first $n$ terms in the outmost summation of $\bar{\bf I}_2[\alpha]$. We observe that $\bar{\bf I}_1[\alpha]$ converges quickly as $n$ increases. We tabulate the values of $\bar{\bf I}_2[\alpha]$ via numerical study in Table~\ref{tab:I_2}.

\begin{figure}[htb!] 
    \centering
    \includegraphics[scale=0.75]{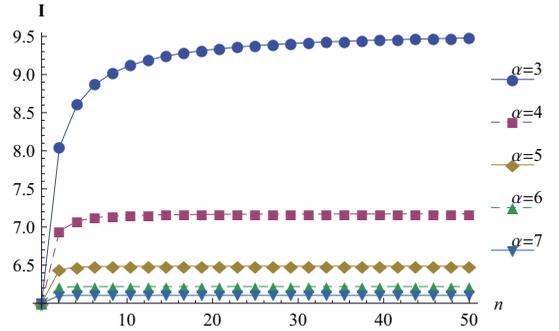} 
   \caption{Numerical values of $\bar{\bf I}_2[\alpha]$ of the first $n$ terms in the summation.} \label{fig:2d_I} 
\end{figure}

\begin{table}[ht] \centering 
\caption{Numerical values of $\bar{\bf I}_2[\alpha]$.}  \label{tab:I_2}
\begin{tabular}{@{}c@{\ }|@{\ }c@{\ }|@{\ }c@{\ }|@{\ }c@{\ }|@{\ }c@{\ }|@{\ }c@{}}  \hline \hline
& \multirow{2}{*}{$\alpha=3$} & \multirow{2}{*}{$\alpha=4$} & \multirow{2}{*}{$\alpha=5$} & \multirow{2}{*}{$\alpha=6$} & \multirow{2}{*}{$\alpha=7$}\\ 
& & & & & \\
\hline
\multirow{2}{*}{$\bar{\bf I}_2[\alpha]$} & 
\multirow{2}{*}{9.56077} & 
\multirow{2}{*}{7.17297} & 
\multirow{2}{*}{6.48636} & 
\multirow{2}{*}{6.21992} & 
\multirow{2}{*}{6.10368} \\
& & & & & \\
\hline \hline
\end{tabular} 
\end{table}

\end{document}